\newcommand{\CC}{\mathbb{C}}
\newcommand{\EE}{\mathbb{E}}
\newcommand{\NN}{\mathbb{N}}
\newcommand{\PP}{\mathbb{P}} 
\newcommand{\QQ}{\mathbb{Q}} 
\newcommand{\RR}{\mathbb{R}}
\newcommand{\ZZ}{\mathbb{Z}}
\newcommand{\cC}{\mathcal{C}}
\newcommand{\cM}{\mathcal{M}} 
\newcommand{\cH}{\mathcal{H}}
\newcommand{\cT}{\mathcal{T}}
\newcommand{\cS}{\mathcal{S}}
\renewcommand{\a}{\alpha}
\newcommand{\D}{\Delta} 
\renewcommand{\d}{\delta}
\newcommand{\g}{\gamma}
\newcommand{\la}{\lambda}
\renewcommand{\b}{\beta} 
\renewcommand{\k}{\kappa} 
\newcommand{\om}{\omega} 
\newcommand{\s}{\sigma}
\newcommand{\eps}{\varepsilon}
\newcommand{\el}{\langle} 
\newcommand{\er}{\rangle}
\newcommand{\tr}{\mathrm{tr}}
\newcommand{\lra}{\leftrightarrow}
\renewcommand{\b}{\beta}
\newcommand{\oo}{\infty}
\newcommand{\ket}[1]{|#1\rangle}
\newcommand{\pet}[1]{|#1)}
\newcommand{\sm}{\setminus}
\newcommand{\se}{\subseteq}
\newcommand{\ol}{\overline}
\newcommand{\crit}{\mathrm{c}}
\newcommand{\spi}{S^1}
\newcommand{\spii}{S^2}
\newcommand{\spiii}{S^3}
\newcommand{\spj}{S^j}
\newcommand{\bS}{\mathbf{S}}
\newcommand{\bJ}{\mathbf{J}}
\newcommand{\bK}{\mathbf{K}}
\newcommand{\one}{\hbox{\rm 1\kern-.27em I}}
\newcommand{\dom}{\trianglerighteq}
\newcommand{\tsum}{\textstyle\sum}
\newcommand{\be}{\begin{equation}}
\newcommand{\ee}{\end{equation}}
\newtheoremstyle{slthm}
     {}
     {\baselineskip}
     {\slshape}
     {\parindent}
     {\scshape}
     {.}
     { }
     {}
\theoremstyle{slthm}
\newtheorem{definition}{Definition}[section]
\newtheorem{theorem}[definition]{Theorem}
\newtheorem{proposition}[definition]{Proposition}
\newtheorem{lemma}[definition]{Lemma}
\title[Quantum spin systems and interchange processes]
{The free energy in a class of quantum spin systems 
and interchange processes}
\author{J. E. Bj\"ornberg}
\address{Department of Mathematics, University of Copenhagen, Denmark}
\date{\today}
\begin{document}

\begin{abstract}
We study a class of quantum spin systems in the
mean-field setting of the complete graph.  
For spin $S=\tfrac12$
the model is the Heisenberg ferromagnet, for general 
spin $S\in\tfrac12\NN$ it has
a probabilistic representation as a cycle-weighted 
interchange process.
We determine the free energy and 
the critical temperature (recovering results by T\'oth
and by Penrose when $S=\tfrac12$).
The critical temperature is shown to
coincide (as a function of $S$)
with that of the $q=2S+1$ state
classical Potts  model, and the phase transition is 
discontinuous when $S\geq1$.
\end{abstract}

\maketitle

\section{Introduction}

It has been well-known since the work of 
T\'oth~\cite{toth-heis} and 
Aizenman and Nachtergaele~\cite{an} in the early
1990's that many quantum spin-systems can be analyzed using
probabilistic representations.  
T\'oth's representation of the (spin-$\tfrac12$) Heisenberg
ferromagnet in terms of random transpositions
is particularly appealing in its simplicity. 
However, though simple to define, it
has proved challenging to obtain rigorous results using this
representation.  
While substantial progress has
been made on several other models using probabilistic 
representations~\cite{B-irb,B-van,BG,bjo-uel,CI,CNS,lees,uel-jmp},
proving a phase-transition in
the ferromagnetic Heisenberg model on the lattice $\ZZ^d$
remains an open challenge.
 
For mean-field variants there has been more progress, and
related models have recently received quite a lot of
attention in the probability 
literature~\cite{alon-kozma,angel,berestycki,berestycki-kozma,
bjo-cycles,hammond-sharp,kmu,schramm}.  
The free energy of the spin-$\tfrac12$ 
Heisenberg ferromagnet on the complete graph was
determined already in 1990:
by T\'oth~\cite{toth-bec}  using a random-walk
representation, and simultaneously but independently by 
Penrose~\cite{penrose} 
by explicitly diagonalizing the Hamiltonian.

Here we extend the latter 
results to a class of spin $S\in\tfrac12\NN$ 
models, with Hamiltonian equal to a sum of transposition-operators
(see below for a precise definition).  Probabilistically, the model
naturally generalizes T\'oth's permutation-representation:
a weight factor $2^{\#\mathrm{cycles}}$ is replaced by 
$(2S+1)^{\#\mathrm{cycles}}$.   Our approach is different both from
that of T\'oth and that of Penrose.
The key step is to
obtain an expression for the partition function in terms of
the irreducible representations of the symmetric group.
Perhaps our most surprising 
result is a connection to the classical Potts-model:
we show that the critical temperature
of our model, as a function of $S$, coincides with that of the
$q=2S+1$ state Potts model.

We now define the model and state
our primary results.

\subsection{Model and main results}

We let $\spi,\spii,\spiii$ denote the usual spin-operators, satisfying
the relations
\[
[\spi,\spii]=i\spiii,\quad
[\spii,\spiii]=i\spi,\quad
[\spiii,\spi]=i\spii,
\]
where $i=\sqrt{-1}$.
For each $S\in\tfrac12\NN$ we work with the standard spin-$S$
representation, where the $\spj$ are Hermitian matrices acting on
$\cH=\CC^{2S+1}$.  We fix an orthonormal basis for $\cH$ 
consisting  of eigenvectors for $\spiii$,  denoting the basis
vector with eigenvalue $a\in\{-S,-S+1,\dotsc,S\}$ by 
$\ket a$.

Let $G=K_n=(V,E)$ be the complete graph on $n$ vertices, i.e.\ the
graph with vertex set $V=\{1,\dotsc,n\}$ and edge set $E=\binom{V}{2}$
consisting of one edge (bond) per pair $x\neq y$ of vertices.  For
each $x\in V$ we take a copy $\cH_x$ of $\cH$, and 
we form the tensor product $\cH_V=\otimes_{x\in V}\cH_x$.
An orthonormal basis for $\cH_V$ is given by the vectors
$\ket{\mathbf{a}}=\otimes_{x\in V}\ket{a_x}$ for 
$\mathbf{a}=(a_x)_{x\in V}\in\{-S,\dotsc,S\}^V$.
If $A$ is an operator acting on $\cH$ we define $A_x$ acting on
$\cH_V$ by $A_x=A\otimes\mathrm{Id}_{V\sm \{x\}}$.

The transposition operator $T_{xy}$ 
on $\cH_V$ is defined as follows.
For each pair $x\neq y$ of vertices,  $T_{xy}$
is given by its action on the basis elements $\ket{\mathbf{a}}$:  
\be\label{T-def}
T_{xy}\otimes_{z\in V}\ket{a_z}=
\otimes_{z\in V} \ket{a_{\tau(z)}},
\ee
where $\tau=(x,y)$ is the transposition of $x$ and $y$:
\[
\tau(z)=\left\{
\begin{array}{ll}
y, & \mbox{if } z=x,\\
x, & \mbox{if } z=y,\\
z, & \mbox{otherwise.}
\end{array}
\right.
\]
Thus $T_{xy}$ interchanges the $x$ and $y$ entries of 
$\ket{\mathbf{a}}$.

Our model has the Hamiltonian
\be
H=H_n=-\sum_{xy\in E} (T_{xy}-1)
\ee
acting on $\cH_V$.  We take the inverse-temperature of the form $\b/n$
for constant $\b>0$, thus the partition function is
\be
Z_n(\b)=\tr (e^{-(\sfrac\b n) H_n}).
\ee
We note that $T_{xy}$ may be expressed as a polynomial in the
operators $\bS_x\cdot\bS_y=\sum_{j=1}^3\spj_x\spj_y$.  
For example, when $S=\tfrac12$ we have
that $T_{xy}=2(\bS_x\cdot\bS_y)+\tfrac12$, and when $S=1$ that
$T_{xy}=(\bS_x\cdot\bS_y)^2+(\bS_x\cdot\bS_y)-1$.  (See
Proposition~\ref{T-prop} 
in the appendix for the general case.)
Thus for $S=\tfrac12$ we recover the Heisenberg ferromagnet at
inverse-temperature $2\b/n$.

Our first main result is an explicit formula 
for the free energy.  
For each $S\in\tfrac12\NN$, let $\theta=2S+1$ and let
\[
\D=\D_\theta=\{x=(x_1,\dotsc,x_\theta)\in[0,1]^\theta:
x_1\geq\dotsb\geq x_\theta,\tsum_{j=1}^\theta x_j=1\}.
\]
Define the function $\phi_\b:\D\to\RR$ by
\be\label{phi-def}
\phi_\b(x)=\frac\b2\Big(\sum_{j=1}^\theta x_j^2-1\Big)
-\sum_{j=1}^\theta x_j \log x_j.
\ee

\begin{theorem}\label{free-en-thm}
We have that
\be
\lim_{n\to\oo} \frac1n \log Z_n(\b)=
\max_{x\in\D} \phi_\b(x).
\ee
\end{theorem}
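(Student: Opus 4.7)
The strategy is to use Schur--Weyl duality to diagonalise the Hamiltonian and reduce the trace defining $Z_n(\b)$ to a finite sum over partitions $\la\vdash n$ with at most $\theta=2S+1$ rows, then extract the leading exponential behaviour by Laplace's principle.

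First I would note that, by \eqref{T-def}, the operator $T_{xy}$ acts on $\cH_V=(\CC^\theta)^{\otimes n}$ precisely by permuting the $x$th and $y$th tensor factors. Hence the $T_{xy}$ generate an action of the symmetric group $S_n$ on $\cH_V$ that commutes with the diagonal action of $\mathrm{GL}_\theta(\CC)$, and Schur--Weyl duality gives the orthogonal decomposition
\[
\cH_V\;\cong\;\bigoplus_{\la\vdash n,\;\ell(\la)\le\theta} V_\la\otimes W_\la,
\]
where $V_\la$ denotes the irreducible $S_n$-module of shape $\la$ and $W_\la$ the irreducible $\mathrm{GL}_\theta$-module of highest weight $\la$. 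The element $\sum_{xy\in E}T_{xy}\in\CC[S_n]$ is a sum over a single conjugacy class and is therefore central, so by Schur's lemma it acts on each $V_\la\otimes W_\la$ as multiplication by the scalar $c_\la:=\sum_{(i,j)\in\la}(j-i)=\sum_i\tbinom{\la_i}{2}-\sum_j\tbinom{\la'_j}{2}$ (the sum of contents of $\la$).

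Taking traces gives the exact identity
\[
Z_n(\b)=\sum_{\la\vdash n,\;\ell(\la)\le\theta}\dim V_\la\cdot\dim W_\la\cdot\exp\Big(\tfrac\b n\big(c_\la-\tbinom n2\big)\Big).
\]
Since the number of partitions of $n$ with at most $\theta$ rows is polynomial in $n$, Laplace's principle reduces the problem to evaluating a single summand to leading exponential order. Writing $\la_i=nx_i$ with $x\in\D$, I would establish: (i) the entropy asymptotic $\tfrac1n\log\dim V_\la\to -\sum_j x_j\log x_j$, from the Frobenius (or hook-length) dimension formula together with Stirling, using that for $\ell(\la)\le\theta$ the correction to the multinomial $\tbinom{n}{\la_1,\dotsc,\la_\theta}$ is only polynomial in $n$; (ii) the bound $\dim W_\la=O(n^{\binom\theta2})$ from Weyl's dimension formula, which contributes only $o(n)$ in the logarithm; and (iii) the content asymptotic $c_\la=\tfrac12 n^2\sum_j x_j^2+O(n)$, which is immediate from the explicit formula above since $\sum_i\tbinom{\la_i}2=\tfrac12 n^2\sum_j x_j^2+O(n)$ while $\sum_j\tbinom{\la'_j}2\le\tbinom\theta2\,\la_1=O(n)$. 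Combining these with $\tbinom n2=\tfrac12 n^2+O(n)$ yields $\tfrac1n\log\big(\dim V_\la\,\dim W_\la\,e^{(\b/n)(c_\la-\binom n2)}\big)\to\phi_\b(x)$, and passing to the maximum over $\la$ gives the stated variational formula.

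The principal technical obstacle is to make the asymptotics in (i)--(iii) sufficiently \emph{uniform} in $\la$ to legitimise the Laplace argument, especially near the boundary of $\D$ where one or more $x_j$ vanish. The entropy function $x\mapsto -\sum_j x_j\log x_j$ is continuous but only H\"older there, so partitions with some $\la_j$ of sub-linear order must be handled by a separate truncation step; a clean uniform upper bound $\dim V_\la\le\tbinom{n}{\la_1,\dotsc,\la_\theta}$ is useful. The remaining ingredients (compactness of $\D$, continuity of $\phi_\b$, and the existence of a maximiser) are standard, and the optimiser can be characterised by Lagrange multipliers.
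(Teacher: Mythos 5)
Your proposal is correct, and it reaches the same exact identity as the paper but by a genuinely different and more direct route. The paper never invokes Schur--Weyl duality explicitly: it starts from the probabilistic representation $Z_n(\b)=\EE[\theta^{\ell(\om)}]$, applies a ``colouring lemma'' to rewrite this as $\sum_{\la\vdash n}|K(\la)|\,G_n(\la)$ with $G_n(\la)=\binom{n}{\la}\PP(\s\in\cT_\la)$, and then decomposes the coset representation $V_\la$ via Young's rule together with the Alon--Kozma and Berestycki--Kozma computations to get $G_n(\la)=\sum_{\mu}d_\mu K_{\mu\la}\exp\{\frac\b n\binom n2[r(\mu)-1]\}$. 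Summing over $\la$ and using $\sum_{\k}K_{\mu\k}=\dim W_\mu$ (number of semistandard $\theta$-tableaux of shape $\mu$) recovers exactly your Schur--Weyl formula, since $c_\mu=\binom n2 r(\mu)=\tfrac12\sum_j\mu_j(\mu_j-2j+1)$ is the sum of contents. So the two derivations produce literally the same finite-$n$ identity. Your route is shorter and cleaner for $h=0$; the paper's longer route is chosen because it keeps the partition-indexed intermediate quantity $G_n(\la)$, which is precisely what is needed to analyse the $h$-weighted partition function $Z_n(\b,h)$ in Lemma~\ref{pf-lem} and Theorem~\ref{free-energy-thm} (and hence to locate the critical point). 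Incidentally, your approach also extends to $h\neq 0$ by replacing $\tr$ with $\tr(D^{\otimes n}\cdot{})$ for $D=\mathrm{diag}(e^h,1,\dotsc,1)$, which turns $\dim W_\la$ into the Schur polynomial $s_\la(e^h,1,\dotsc,1)$ --- worth noting if you pursue this further.

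Your asymptotic steps (i)--(iii) are all correct and essentially identical to what the paper does in the proof of Theorem~\ref{free-energy-thm} (Frobenius/Fulton--Harris~(4.11) plus Stirling for $\dim V_\la$; a crude polynomial bound for the $\theta$-bounded multiplicity factor, which in the paper is $K_{\mu\la}\le(n+1)^{\theta^2}$ and in yours is $\dim W_\la\le(n+\theta)^{\binom\theta2}$; and the content computation via~\eqref{r-eq}). Your worry about boundary behaviour is not really an obstacle: for the Laplace upper bound one only needs the \emph{uniform} two-sided bound $\mathrm{poly}(n)^{-1}\binom n\la\le\dim V_\la\le\binom n\la$, which holds for all $\la$ with at most $\theta$ rows without any truncation, and the lower bound only needs to be checked along one sequence $\la/n\to x^\star$ for a maximiser $x^\star$; the entropy $-\sum_jx_j\log x_j$ being merely H\"older at $\pd\D$ does not cause trouble because $\phi_\b$ is continuous on the compact set $\D$ and the error terms are $O(\log n/n)$ uniformly.
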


As mentioned previously, our analysis relies on a probabilistic
representation.  
We describe this now.
Let $\PP(\cdot)$ be a probability measure governing a
collection $\om=(\om_{xy}:xy\in E)$ of independent rate 1 Poisson
processes on $[0,\b/n]$, indexed by the edges of $G$.  Thus each
$\om_{xy}$ is a random (almost-surely finite) subset of $[0,\b/n]$;
the number of elements of $\om_{xy}$ in an interval $[s,t]$
has the Poisson distribution Po($t-s$), and these numbers are
independent for disjoint intervals. 
We think of $[0,\b/n]$ as a time-interval.
See Figure~\ref{loops-fig} for a pictorial representation.

 As explained in e.g.~\cite[eq.~(2.11)]{an}
we have from the Lie--Trotter product formula that
\be\label{trotter}
e^{-(\sfrac\b n) H_n}=\EE\Big[
\sideset{}{^{\,\star}}\prod_{(xy,t)\in\om} T_{xy}\Big],
\ee
where $\Pi^\star$ is the time-ordered product over all elements of
$\om$.  
In light of~\eqref{T-def} and~\eqref{trotter} we may think of each
point $(xy,t)\in\om$ as representing a transposition of $x,y\in\{1,\dotsc,n\}$
at time $t$.  We let $\s=\s(\om)=\prod^\star_{(xy,t)\in\om}(x,y)$ 
denote the (time-ordered) composition
of these transpositions from time 0 to time $\b/n$.  
Thus $\s\in \cS_n$, the symmetric group on $n$
letters.  

Recall that each $\s\in \cS_n$ may be written as a product of disjoint
cycles (orbits).  Let $\ell=\ell(\om)$ denote the number of such
cycles of $\s(\om)$, including singletons.  Taking the trace
in~\eqref{trotter} we find that we get a contribution of 1 from each
basis vector $\ket{\mathbf{a}}$ for which the function
$\mathbf{a}:V\to\{-S,\dotsc,S\}$ is constant on each cycle of
$\s(\om)$.  (Figure~\ref{loops-fig} is helpful in verifying this statement.)  
From the other $\ket{\mathbf{a}}$ we get contribution 0.  
Writing
$\theta=2S+1$, as before, 
for the number of possibilities per cycle, we
conclude that 
\be\label{pf-exp}
Z_n(\b)=\tr(e^{-(\sfrac\b n) H_n})=\EE[\theta^{\ell(\om)}].
\ee

\begin{figure}
\centering
\includegraphics[scale=.8]{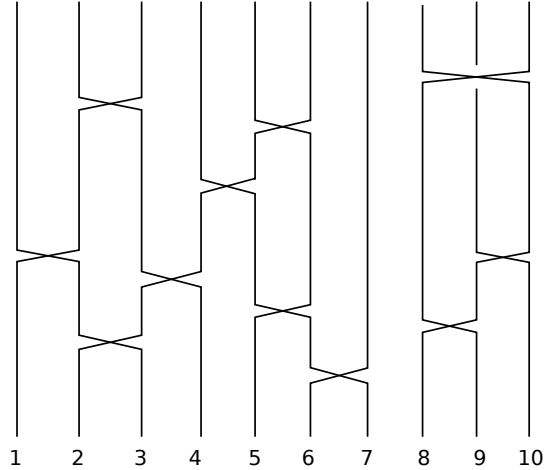}
\caption{
A sample $\om$, with
the vertex set $V=\{1,\dotsc,10\}$ on the 
horizontal axis and time going upwards.
Elements $(xy,t)\in\om$ are represented 
as crosses, and are to be thought of as
transpositions.  (In this picture, for clarity only, most crosses
occur between consecutive vertices.)
Here $\s(\om)=(1,3)(2,6,7,4)(9,10)(5)(8)$
and $\ell(\om)=5$.
}
\label{loops-fig}
\end{figure}

In order to identify a phase-transition we will work also with a
`weighted' version of~\eqref{pf-exp}.  Let $\cC=\cC(\om)$ denote the
set of cycles of the permutation $\s(\om)$, and for $h\in\RR$ write 
\be\label{pf-exp-2}
Z_n(\b,h)=
e^{-(\sfrac h\theta) n}
\EE\Big[\prod_{\g\in\cC} (e^{h|\g|}+\theta-1)\Big],
\ee
where $|\g|$ denotes the size of the cycle $\g$.  Note that
$Z_n(\b,0)=Z_n(\b)$.  
We will later (see Theorem~\ref{free-energy-thm}) 
obtain an explicit expression for
the limit
$z(\b,h)=\lim_{n\to\oo} \frac1n\log Z_n(\b,h)$.
(Theorem~\ref{free-en-thm} is the special case $h=0$ of that result.)
Our second main result concerns the right derivative 
$z^+(\b)=\lim_{h\downarrow 0}\frac{z(\b,h)-z(\b,0)}{h}$ 
of $z(\b,h)$ at $h=0$. 

\begin{theorem}\label{mag-thm}
Define
\be\label{b-crit}
\beta_\crit(\theta)=\left\{
\begin{array}{ll}
2, & \mbox{if } \theta=2,\\
2\big(\tfrac{\theta-1}{\theta-2}\big)\log(\theta-1), &
\mbox{if } \theta\geq3.
\end{array}\right.
\ee
Then for all $\theta\in\{2,3,\dotsc\}$ we have that
\be
z^+(\b)\left\{\begin{array}{ll}
=0, & \mbox{if } \b<\b_\crit,\mbox{ or } \theta=2\mbox{ and }
          \b=\b_\crit,\\
>0, & \mbox{if } \b>\b_\crit,\mbox{ or } \theta\geq3\mbox{ and }
          \b=\b_\crit.
\end{array}\right.
\ee
\end{theorem}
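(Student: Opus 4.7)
The plan is to reduce the theorem to a variational problem via Theorem~\ref{free-energy-thm}. That result is expected to express $z(\b,h)$ as a maximum $\max_{x\in\D}\Phi_{\b,h}(x)$ over the same simplex, with $\Phi_{\b,0}=\phi_\b$ and $\partial_h\Phi_{\b,h}(x)\big|_{h=0}=x_1-1/\theta$: the distinguished factor $e^{h|\g|}$ in~\eqref{pf-exp-2} couples linearly to the largest coordinate in the thermodynamic limit, while the prefactor $e^{-hn/\theta}$ contributes the constant $-1/\theta$. A standard envelope (Danskin) argument, using continuity of $\Phi_{\b,h}$ in $h$ and compactness of $\D$, then yields
\be\label{zp-formula}
z^+(\b)=\max\{x_1-\tfrac1\theta:x\in\D\text{ is a maximizer of }\phi_\b\},
\ee
so the theorem reduces to locating the maximizers of $\phi_\b$ on $\D$.

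I would classify these maximizers via Lagrange multipliers applied to the symmetric problem on $\{x>0,\sum x_j=1\}$. The first-order condition $\b x_j-\log x_j-1=\la$ forces each coordinate to lie in the at most two solutions of $te^{-\b t}=e^{-1-\la}$, so the coordinates of any critical point take at most two distinct values. A direct comparison argument, analogous to the classical mean-field Potts analysis, then shows that the global maximum on $\D$ is achieved at a configuration with a single largest coordinate, namely $x^{(a)}=(a,b,\dotsc,b)$ with $b=(1-a)/(\theta-1)$ and $a\in[1/\theta,1]$. Setting $f_\b(a)=\phi_\b(x^{(a)})$, equation~\eqref{zp-formula} becomes
\[
z^+(\b)=a^*(\b)-\tfrac1\theta,\qquad a^*(\b)=\arg\max_{a\in[1/\theta,1]}f_\b(a),
\]
and the theorem is equivalent to identifying where $a^*(\b)$ jumps away from $1/\theta$.

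A direct calculation yields $f_\b'(a)=\b(a-b)-\log(a/b)$, so the non-trivial critical points of $f_\b$ satisfy $\b=\b(a):=(\log a-\log b)/(a-b)$. For $\theta=2$ this branch emerges continuously from $a=1/2$ at $\b=2$: the second-derivative test gives $f_\b''(1/2)=2\b-4$, so $a^*(\b)=1/2$ for $\b\le 2$ (continuous transition) and $a^*(\b)>1/2$ for $\b>2$. For $\theta\ge3$ the non-trivial branch appears at some smaller $\b$ via a saddle-node bifurcation, but $f_\b$ along this branch only overtakes $f_\b(1/\theta)$ at some strictly larger $\b$, with $a$ bounded away from $1/\theta$, so the transition is discontinuous.

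The main obstacle is to identify this crossing value for $\theta\ge3$ with~\eqref{b-crit}. Concretely, one must solve the system $f_\b'(a)=0$ and $f_\b(a)=f_\b(1/\theta)$ in $(a,\b)$. Substituting $\b=\b(a)$ from the first equation into the second, routine simplification produces the explicit solution $a=(\theta-1)/\theta$ and $b=1/(\theta(\theta-1))$, for which $\log(a/b)=2\log(\theta-1)$ and $a-b=(\theta-2)/(\theta-1)$; plugging back into $\b(a)$ yields $\b_\crit=2\frac{\theta-1}{\theta-2}\log(\theta-1)$. Monotonicity of the difference $f_\b(a^*(\b))-f_\b(1/\theta)$ in $\b$, established via implicit differentiation along the non-trivial branch, closes both directions of the dichotomy; moreover $a^*(\b_\crit)=(\theta-1)/\theta>1/\theta$, so $z^+(\b_\crit)>0$ and the transition is discontinuous, completing the proof.
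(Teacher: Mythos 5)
Your overall plan is the paper's plan: reduce via Theorem~\ref{free-energy-thm} to $z^+(\b)=x_1^\uparrow(\b)-\tfrac1\theta$ where $x^\uparrow$ is a maximizer of $\phi_\b$ with maximal first coordinate (this is the paper's Theorem~\ref{der-thm}), classify stationary points of $\phi_\b$ by Lagrange multipliers so that coordinates take at most two values, restrict to the one-parameter ansatz, and solve the resulting scalar equations. The algebra you produce for $\theta\ge3$, namely $a=(\theta-1)/\theta$, $b=1/(\theta(\theta-1))$, $\log(a/b)=2\log(\theta-1)$, $a-b=(\theta-2)/(\theta-1)$, is exactly the paper's computation. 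However, there are three genuine gaps.

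First, you assert that the global maximizer must have a \emph{single} large coordinate, i.e.\ the case $r=1$ in the family $x=(t,\dotsc,t,\tfrac{1-rt}{\theta-r},\dotsc)$, ``by a direct comparison argument analogous to the classical mean-field Potts analysis,'' but you do not carry it out. The Lagrange condition $\b x_j-\log x_j-1=\la$ only says the coordinates lie among at most two values; it does not tell you how many coordinates take the larger one. The paper deals with this by \emph{not} assuming $r=1$: it derives for each $r\in\{1,\dotsc,\theta-1\}$ a threshold condition $\b\ge R(t)$ in~\eqref{R-eq}, shows that $\inf_t R(t)=\rho(r/\theta)$ where $\rho(s)=2\theta s\tfrac{1-s}{1-2s}\log\tfrac{1-s}{s}$ is increasing on $[0,\tfrac12]$, and concludes that the minimal threshold over $r$ occurs at $r=1$. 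You need some version of this comparison to legitimately discard $r\ge 2$.

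Second, the $\theta=2$, $\b=\b_\crit=2$ boundary case is not settled by your argument. You observe $f_\b''(1/2)=2\b-4$, which vanishes at $\b=2$, so the second-derivative test is inconclusive there; you cannot conclude $a^*(2)=1/2$. The paper instead proves that $R(t)$ is convex (Appendix~\ref{R-app}), that $R'(1/2)=0$, and that $R(1/2+)=2$, from which $R(t)>2$ strictly for all $t>1/2$, so the uniform point remains the unique maximizer at $\b=2$. Some such global (not merely infinitesimal) argument is needed.

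Third, your closing sentence appeals to ``monotonicity of the difference $f_\b(a^*(\b))-f_\b(1/\theta)$ in $\b$, established via implicit differentiation along the non-trivial branch,'' but this is not actually carried out, and along the non-trivial branch $a^*(\b)$ itself moves with $\b$, so the computation is not immediate. The paper sidesteps this by reformulating the crossing condition as $\b\ge R(t)$ and proving convexity of $R$, which gives the required uniqueness and monotonicity in one stroke. In short: the skeleton matches the paper, but the three steps you label as ``a direct comparison argument,'' ``second-derivative test,'' and ``monotonicity via implicit differentiation'' are each where the real work lies, and each is either incomplete or fails at the boundary case.
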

Thus, the critical inverse-temperature is given by~\eqref{b-crit}, and
the phase-transition is continous for $\theta=2$ (i.e.\ $S=\tfrac12$)
and discontinuous for $\theta\geq3$ (i.e.\ $S\geq1$).
We reiterate that the case $\theta=2$ was fully understood
previously~\cite{penrose,toth-bec}.

\subsection{Discussion}

Theorem~\ref{mag-thm} has
 consequences for the following
\emph{weighted interchange process}.  Recall the measure $\PP$
governing the random permutation $\s(\om)$, obtained as the
composition of a  process of transpositions up to time
$\b/n$.  For each $\theta>0$ one can define
another probability measure $\PP_\theta$ by requiring 
$\frac{d\PP_\theta}{d\PP}\propto\theta^{\ell(\om)}$.  The measure
$\PP_\theta$ allows for probabilistic interpretation
of correlation functions.  For example when $S=\tfrac12$:
\[
\el \spiii_x\spiii_y\er=\tfrac12\PP_2(x\lra y),
\]
where $\{x\lra y\}$ is the event that $x$ and $y$ belong to the same
cycle.  Similar relations hold for other $\theta$.
Magnetic ordering is thus accompanied by the occurrence of
large cycles in a $\PP_\theta$-distributed random permutation.

For each $k\geq0$ 
let $X_n(k)=\tfrac1n\sum_{|\g|\geq k}|\g|$ denote the
fraction of vertices in cycles of size at least $k$ in the random
permutation $\s(\om)$.
From Theorem~\ref{mag-thm} we will deduce the
following: 
\begin{proposition}\label{X-prop} 
If $\theta\in\{2,3,\dotsc\}$ and $z^+(\b)=0$ then for any sequence
$k=k_n\to\oo$ and any fixed $\eps>0$, there is a $c>0$ such that
\[
\PP_\theta(X_n(k)\geq\eps)\leq e^{-cn}.
\]
\end{proposition}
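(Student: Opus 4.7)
The plan is to bound $\PP_\theta(X_n(k)\geq\eps)$ by an exponential Markov estimate that couples the event to the tilted partition function $Z_n(\b,h)$, and then to invoke $z^+(\b)=0$ through Theorems~\ref{free-en-thm} and \ref{free-energy-thm}. For any $h>0$, Markov gives
\begin{equation*}
\PP_\theta(X_n(k)\geq\eps)\leq e^{-hn\eps}\,\EE_{\PP_\theta}[e^{hnX_n(k)}]=\frac{e^{-hn\eps}}{Z_n(\b)}\EE\Big[\theta^{\ell}\prod_{|\g|\geq k}e^{h|\g|}\Big].
\end{equation*}
The key step is a cycle-by-cycle comparison with the integrand of $e^{hn/\theta}Z_n(\b,h)=\EE[\prod_\g((\theta-1)+e^{h|\g|})]$: for $|\g|<k$ one has $\theta\leq(\theta-1)+e^{h|\g|}$, while for $|\g|\geq k$ trivially $\theta e^{h|\g|}\leq\theta\cdot((\theta-1)+e^{h|\g|})$. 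Multiplying these per-cycle inequalities and using the elementary bound $N_k\leq n/k$ on the number of cycles of size $\geq k$, one obtains $\theta^{\ell}\prod_{|\g|\geq k}e^{h|\g|}\leq\theta^{n/k}\prod_\g((\theta-1)+e^{h|\g|})$, whence
\begin{equation*}
\EE_{\PP_\theta}\big[e^{hnX_n(k)}\big]\leq\theta^{n/k}\,e^{hn/\theta}\,\frac{Z_n(\b,h)}{Z_n(\b)}.
\end{equation*}

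Taking $(1/n)\log$, using $k=k_n\to\oo$, and invoking Theorem~\ref{free-energy-thm} to pass to the limit $\frac1n\log Z_n(\b,h)\to z(\b,h)$ yields
\begin{equation*}
\limsup_{n\to\oo}\tfrac1n\log\PP_\theta(X_n(k_n)\geq\eps)\leq-h\eps+\tfrac h\theta+z(\b,h)-z(\b,0).
\end{equation*}
Since $h\mapsto z(\b,h)$ is convex (as a limit of log-moment generating functions) and the hypothesis $z^+(\b)=0$ forces $(z(\b,h)-z(\b,0))/h\to 0$ as $h\downarrow 0$, for any $\delta>0$ one can choose $h>0$ small enough that $z(\b,h)-z(\b,0)\leq\delta h$. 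The right-hand side is then at most $h(1/\theta-\eps+\delta)$, which is strictly negative once $\eps>1/\theta$ and $\delta$ is chosen below $\eps-1/\theta$; for such $\eps$ the desired $\PP_\theta\leq e^{-cn}$ follows.

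The main obstacle is extending this to $\eps\leq 1/\theta$: the unwanted $h/\theta$ term, inherited from the normalization $e^{-hn/\theta}$ in the definition of $Z_n(\b,h)$, prevents the Chernoff argument alone from covering small $\eps$. To address this regime I would decompose $\{X_n(k)\geq\eps\}$ according to the size of the largest cycle. When some cycle has macroscopic size $\geq\alpha n$, the uniform coloring implicit in the weight $\theta^{\ell}$ produces with probability $1/\theta$ a color-magnetization $\max_c M_c\geq 1/\theta+\alpha(1-1/\theta)$, an event exponentially suppressed by the large-deviation principle for the empirical color distribution---which follows from Theorem~\ref{free-en-thm} together with the strict maximality of $\phi_\b$ at the symmetric point (a consequence of $z^+(\b)=0$ in subcritical $\b$). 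When no macroscopic cycle exists, the event forces $\Theta(\eps n/k_n)$ mesoscopic cycles of size in $[k_n,\alpha n]$; under the subcritical interchange process each is exponentially unlikely in $k_n$, so accumulating order $n/k_n$ of them costs $e^{-\Omega(n)}$. Making this second case fully rigorous is the most delicate part of the argument.
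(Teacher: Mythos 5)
Your Chernoff argument is set up correctly, but as you recognize yourself it only closes for $\eps>1/\theta$, and the decomposition you sketch for $\eps\leq 1/\theta$ is not a proof. The loss comes from treating the normalization factor $e^{-hn/\theta}$ in the definition of $Z_n(\b,h)$ as a global nuisance term rather than absorbing it into the cycle-by-cycle comparison. The paper's key observation is that since $n=\sum_\g|\g|$, this factor can be written as $\prod_\g e^{-h|\g|/\theta}$, so that
\[
Z_n(\b,h)=\EE\Big[\theta^{\ell}\prod_\g w(h|\g|)\Big],
\qquad
w(x)=\frac{e^{x}+\theta-1}{\theta e^{x/\theta}}
=\tfrac1\theta e^{x(1-1/\theta)}+\tfrac{\theta-1}{\theta}e^{-x/\theta}.
\]
One then aims for the \emph{weaker} tilt $e^{\a h|\g|}$ with a free parameter $\a<1-\tfrac1\theta$: since $w$ is increasing with $w(0)=1$, one has $w(h|\g|)\geq 1$ for all cycles, and since $w(x)\geq\tfrac1\theta e^{x(1-1/\theta)}$, one has $w(h|\g|)\geq e^{\a h|\g|}$ once $h|\g|$ is large (automatic for $|\g|\geq k_n\to\oo$). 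This gives $\EE_\theta[e^{\a h n X_n(k)}]\leq Z_n(\b,h)/Z_n(\b,0)$ with no spurious $e^{hn/\theta}$ factor, and Markov then yields a strictly negative rate $\approx-\a h\eps$ for every $\eps>0$ once $h$ is small enough that $z(\b,h)-z(\b,0)\leq (\a\eps/2)h$ (using $z^+(\b)=0$, equivalently $x_1^\uparrow=1/\theta$ by Theorem~\ref{der-thm}).

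In short: the missing idea is the sacrifice of a constant fraction of the exponent, tilting by $\a h$ rather than $h$, in exchange for a per-cycle weight $w$ that simultaneously dominates $1$ on short cycles and $e^{\a h|\g|}$ on long ones. Without this the unwanted $h/\theta$ survives, and your fallback for $\eps\leq 1/\theta$ (a decomposition by largest cycle plus a mesoscopic-cycle estimate) is not established by the tools in the paper and would require substantial extra work; the paper avoids it entirely.
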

T\'oth's formula~\cite[eq.~(5.2)]{toth-heis}
suggests that a converse to Proposition~\ref{X-prop}
should also hold, i.e.\ that there are cycles of size of the order $n$
when $z^+(\b)>0$.  We have not been able to prove this.
Note, however, that cycles of
order $n$ do occur whenever $\b>\theta\geq1$.
For $\theta=1$ this was proved by Schramm~\cite{schramm},
and for $\theta>1$ it was proved in~\cite{bjo-cycles}
using Schramm's result.

Theorem~\ref{mag-thm} also points to a connection to the classical Potts
model.  In that model, one considers random assignments
$\eta=(\eta_x:x\in V)$ of the values $1,2,\dotsc,q$ to the vertices
$x\in V$, for some fixed $q\in\{2,3,\dotsc\}$.  
Each such assignment receives probability proportional to 
\[
\exp\big(\tfrac\b n\tsum_{xy\in E}\d_{\eta_x,\eta_y}\big).
\]
It was proved by Bollob\'as, Grimmett and Janson 
in~\cite{bgj} (in the more general context of the
random-cluster-representation) that a phase-transition
occurs in this model
at the point $\b=\b_\crit(q)$ with $\beta_\crit(\cdot)$ as given
in~\eqref{b-crit}.  
This equality of critical points may indicate a deeper connection
between the two models, which we hope to explore in future work.

\subsection{Outline}
Over the following three sections we will prove somewhat
more detailed versions of Theorems~\ref{free-en-thm}
and~\ref{mag-thm} and Proposition~\ref{X-prop}.  
In Section~\ref{char-sec} we first 
obtain a formula for $Z_n(\b,h)$ for finite $n$,
stated in Lemma~\ref{pf-lem}.
This formula is amenable to asymptotic analysis,
which we perform in Section~\ref{lim-sec}.  The main result
of that Section is Theorem~\ref{free-energy-thm},
where we compute
$\lim_{n\to\oo}\tfrac1n\log Z_n(\b,h)$.
In Section~\ref{pt-sec} we use the latter result
to describe the phase transition and identify
the critical point.
Some additional proofs are given in the Appendix.

\section{Character decomposition of the partition function}
\label{char-sec}

In this section we obtain an expression for the
partition function $Z_n(\b,h)$ in terms of the irreducible
characters of the symmetric group.
From now on we will usually only refer to the spin 
$S\in\tfrac12\NN$ via the parameter 
$\theta=2S+1\in\{2,3,\dotsc\}$.  Recall that 
$\s=\s(\om)\in\cS_n$ is the random permutation introduced
below~\eqref{trotter}, that $\cC=\cC(\om)$ is the set of
cycles in a disjoint-cycle decomposition of $\s$, 
and that $\ell=\ell(\om)=|\cC(\om)|$ is the number
of cycles.

By a \emph{composition} $\k$ of $n$ 
we mean a vector $\k=(\k_1,\dotsc,\k_\theta)$
with non-negative integer entries, such that
$\sum_{j=1}^\theta\k_j=n$.  Note that we restrict the
number of entries to be exactly $\theta$, and
that we allow some $\k_j$
to be $=0$.  A composition $\la$
 is called a \emph{partition}  if in addition 
$\la_j\geq\la_{j+1}$ for all $j$,
in which case  we write $\la\vdash n$.  
 Any composition may be rearranged to
form a partition.   Given a partition $\la$, let $K(\la)$ denote the set
of compositions that can be obtained by re-ordering the entries of
$\la$.  Clearly $1\leq |K(\la)|\leq\theta!$.
We write $\binom{n}{\la}$ for the multinomial coefficient
\[
\binom{n}{\la}=\frac{n!}{\la_1!\la_2!\dotsb\la_\theta!}.
\]

\subsection{Colouring-lemma}

Let $p_1,\dotsc,p_\theta$ be probabilities, i.e.\ 
non-negative numbers summing to 1.
Write $f(\s)=\PP(\s(\om)=\s)$ for the 
distribution function of $\s(\om)$.  Note that $f(\cdot)$ is a
class-function, i.e.\ $f(\s)=f(\pi)$ whenever $\s$ and $\pi$ have the
same cycle-type.  (This uses that we are working on the complete
graph.) 
For $\la\vdash n$
we write $\cT_\la$ for the \emph{Young subgroup} of 
$\cS_n$, i.e.\ the subgroup consisting of those permutations which fix
each of the sets 
\[
\{1,\dotsc,\la_1\},\quad
\{\la_1+1,\dotsc,\la_1+\la_2\},\quad
\mbox{etc}.
\]

\begin{lemma}[Colouring-lemma]\label{col-lem}
We have that
\[
\EE\Big[\prod_{\g\in\cC}
\Big(\sum_{i=1}^\theta p_i^{|\g|}\Big)\Big]
=\sum_{\la\vdash n} \binom{n}{\la}
\Big(\sum_{\k\in K(\la)} \prod_{i=1}^\theta p_i^{\k_{i}} \Big)
\sum_{\s\in \cT_\la} f(\s).
\]
\end{lemma}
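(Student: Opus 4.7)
The plan is to give the LHS a probabilistic interpretation in terms of a random colouring of $V=\{1,\dotsc,n\}$ and then evaluate by swapping the order of summation.

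First I would introduce an independent random colouring $\mathbf{c}=(c_1,\dotsc,c_n)\in\{1,\dotsc,\theta\}^V$, independent of $\om$, where each $c_x$ takes value $i$ with probability $p_i$. For a fixed cycle $\g\in\cC$,
\[
\sum_{i=1}^\theta p_i^{|\g|}=\PP(\mathbf{c}\text{ is constant on }\g\mid\om),
\]
and by independence of the $c_x$'s across different cycles,
\[
\prod_{\g\in\cC}\Big(\sum_{i=1}^\theta p_i^{|\g|}\Big)
=\PP(\mathbf{c}\text{ is constant on each cycle of }\s(\om)\mid\om).
\]
Taking $\EE$ gives that the LHS equals $\PP(\mathbf{c}\text{ is constant on each cycle of }\s)$; equivalently, the probability that $\s=\s(\om)$ preserves the partition of $V$ into colour classes of $\mathbf{c}$.

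Next I would condition on $\mathbf{c}$ instead of $\om$. For each colouring $\mathbf{c}$ with counts $\k(\mathbf{c})=(\k_1,\dotsc,\k_\theta)$ (where $\k_i=|\{x:c_x=i\}|$), the subgroup of $\cS_n$ preserving the colour classes is a conjugate of the standard Young subgroup $\cT_\la$, where $\la$ is the partition obtained by sorting $\k(\mathbf{c})$. Since $f$ is a class function on $\cS_n$, its sum over any conjugate of $\cT_\la$ equals $\sum_{\s\in\cT_\la}f(\s)$. Hence
\[
\PP(\s\text{ preserves }\mathbf{c}\mid\mathbf{c})=\sum_{\s\in\cT_\la}f(\s),
\]
which depends on $\mathbf{c}$ only through $\la$.

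Finally I would sum over $\mathbf{c}$, grouped by the sorted count vector $\la\vdash n$. For fixed composition $\k\in K(\la)$, the number of colourings with $\k(\mathbf{c})=\k$ equals the multinomial coefficient $\binom{n}{\k}$, and $\PP(\mathbf{c})=\prod_i p_i^{\k_i}$ for each such $\mathbf{c}$. Since the multinomial coefficient depends only on the multiset of entries, $\binom{n}{\k}=\binom{n}{\la}$ for every $\k\in K(\la)$, so summing over $\k\in K(\la)$ and then over $\la$ produces the claimed formula.

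The only subtle point — and the one I would be careful to spell out — is the class-function argument: one must verify that the stabiliser of the colour partition is genuinely conjugate to $\cT_\la$ in $\cS_n$ (this uses that the vertex set is labelled but the colour classes are unordered blocks, exactly the setting for the Young subgroup), and invoke that $f(\cdot)=\PP(\s(\om)=\cdot)$ is a class function because $\om$ is defined on the complete graph, so its law is invariant under conjugation by arbitrary permutations of $V$. Everything else is bookkeeping.
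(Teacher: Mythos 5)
Your proof is correct and follows essentially the same route as the paper: interpret the left-hand side as the probability that an independent random $\theta$-colouring of $V$ is monochromatic on each cycle of $\s(\om)$, then sum over colourings first, using the class-function property of $f$ (valid because we work on the complete graph) to replace the sum over the setwise stabiliser $\cT(C_1,\dotsc,C_\theta)$ of the colour classes by a sum over $\cT_\la$. The only phrasing to tighten is that the relevant subgroup is the one fixing \emph{each} colour class setwise (an ordered tuple of blocks), which is indeed conjugate to $\cT_\la$, rather than the possibly larger stabiliser of the unordered partition into blocks.
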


\begin{proof}
Colour each vertex of 
$V=\{1,\dotsc,n\}$ independently using the colours
$1,\dotsc,\theta$, 
colour $\# i$ with probability $p_i$.
Write $\cM$ for the event that all cycles of $\s$ are
monochromatic.   The conditional probability
of $\cM$ given $\s$ is 
\[
\prod_{\g\in\cC}
\Big(\sum_{i=1}^\theta p_i^{|\g|}\Big),
\]
so the left-hand-side of the claim is just $\PP(\cM)$.
On the other hand,  by assigning the colours first 
we see that
\be\label{cs-eq}
\PP(\cM)=\sum_{C_1,\dotsc,C_\theta}
\Big(\prod_{i=1}^\theta p_i^{|C_i|}\Big)
\PP\big(\s\in \cT(C_1,\dotsc,C_\theta)\big)
\ee
where the sum is over all (ordered) set partitions
$C_1,\dotsc,C_\theta$ of $\{1,\dotsc,n\}$, and
$\cT(C_1,\dotsc,C_\theta)$ is the subgroup of $\cS_n$ consisting of
permutations which fix each of the sets $C_i$.

Let $\la\vdash n$ be the partition of $n$
obtained by ordering the $|C_i|$ by size.  
Then there is some $\pi\in \cS_n$ such that 
\be
\pi^{-1} \cT(C_1,\dotsc,C_\theta) \pi= \cT_\la.
\ee
Indeed, conjugation corresponds to relabelling
the vertices, so we simply choose the appropriate relabelling of the
sets $C_i$.  It follows that
\[\begin{split}
\PP\big(\s\in \cT(C_1,\dotsc,C_\theta)\big)
&=\sum_{\s\in \cT(C_1,\dotsc,C_\theta)} f(\s)
=\sum_{\s\in \cT_\la} f(\pi\s \pi^{-1})\\
&=\sum_{\s\in \cT_\la} f(\s),
\end{split}\]
since $f(\cdot)$ is a class-function.
Putting this into~\eqref{cs-eq} and summing over all possible
$\la\vdash n$ we get that 
\be
\PP(\cM)=
\sum_{\la\vdash n}  
\Big(\sum_{\s\in \cT_\la} f(\s)\Big)
\Big(\sum_{C_1,\dotsc,C_\theta}
\prod_{i=1}^\theta p_i^{|C_i|}\Big)
\ee
where now the sum over the $C_i$ is restricted to those
with the property that $(|C_1|,\dotsc,|C_\theta|)\in K(\la)$.   
This sum may be performed by first summing over all
$\k\in K(\la)$,
and then over all choices of the sets $C_i$ with $\k_i=|C_i|$.
For each fixed $\k$, there are $\binom{n}{\la}$ choices of
the sets.   It follows that 
\[
\sum_{C_1,\dotsc,C_\theta}
\prod_{i=1}^\theta p_i^{|C_i|}
=\binom{n}{\la} \sum_{\k\in K(\la)} \prod_{i=1}^\theta p_i^{\k_i},
\]
which proves the claim.
\end{proof}

Introduce the notation
\be\label{G-def}
G_n(\la)=\binom{n}{\la}\PP(\s\in \cT_\la)=
\binom{n}{\la} \sum_{\s\in\cT_\la}f(\s),\quad
\mbox{for } \la\vdash n.
\ee
Taking all the $p_i=\tfrac1\theta$ in Lemma~\ref{col-lem}
and using~\eqref{pf-exp}
we get (cancelling a factor $\theta^{-n}$) that
$Z_n(\b)=\EE[\theta^{\ell(\om)}]=\sum_{\la\vdash n} |K(\la)| G_n(\la)$.
More generally, we may take
\be
p_1=p e^h, \qquad p_2=\dotsb=p_\theta=p,
\qquad\mbox{for }h\in\RR,
\ee
with appropriate normalization $p=(e^h +\theta -1)^{-1}$.
Lemma~\ref{col-lem} and~\eqref{pf-exp-2} give that
\be
e^{(\sfrac h \theta) n} Z_n(\b,h)=
\EE\Big[\prod_{\g\in\cC} (e^{h|\g|}+\theta -1)\Big]
=\sum_{\la\vdash n}
\Big(\sum_{\k\in K(\la)} e^{h \k_1}\Big)G_n(\la).
\ee
The factors $\sum_{\k}e^{h\k_1}$ are bounded by simple
expressions.  Indeed,  if $h\geq0$ then,
since $e^{h\la_1}$ is a summand in the sum over $K(\la)$,
we have that 
\be
\sum_{\k\in K(\la)} e^{h \k_1}=
e^{h \la_1} \sum_{\k\in K(\la)} e^{h (\k_1-\la_1)}
\left\{
\begin{array}{l}
\geq e^{h \la_1}\\
\leq  \theta! e^{h \la_1},
\end{array}\right.
\ee
since $\la_1$ is the largest of the $\k_i$.
Similarly, if $h\leq0$ then
\be
\sum_{\k\in K(\la)} e^{h \k_1}
\left\{
\begin{array}{l}
\geq e^{h \la_\theta}\\
\leq  \theta! e^{h \la_\theta}.
\end{array}\right.
\ee

We will use the notation $f(n)\asymp g(n)$ to denote that there is a
constant $C>0$ such that $\tfrac1C g(n)\leq f(n)\leq C g(n)$ for all 
$n$.  We may summarize the above as follows:
\begin{lemma}\label{pf-lem}
With $G_n(\la)=\binom{n}{\la}\PP(\s\in\cT_\la)$ as in~\eqref{G-def},
we have that
\[\begin{split}
Z_n(\b)&\asymp \sum_{\la\vdash n} G_n(\la),\quad\mbox{and}\\
e^{(\sfrac h\theta) n}Z_n(\b,h)&\asymp 
\left\{\begin{array}{ll}
\sum_{\la\vdash n}e^{h\la_1} G_n(\la), &
\mbox{if } h> 0,\\
\sum_{\la\vdash n}e^{h\la_\theta} G_n(\la), &
\mbox{if } h< 0.
\end{array}
\right.
\end{split}\]
\end{lemma}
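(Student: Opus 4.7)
The plan is to recognise that Lemma~\ref{pf-lem} is essentially a consolidation of identities and two-sided bounds already assembled in the discussion preceding its statement; the proof consists of applying the Colouring-lemma (Lemma~\ref{col-lem}) with two carefully chosen vectors of probabilities $(p_1,\dotsc,p_\theta)$ and invoking elementary bounds on the inner sums.

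For the first estimate, I would apply Lemma~\ref{col-lem} with uniform weights $p_i=1/\theta$. The left-hand side of that lemma becomes $\theta^{-n}\EE[\theta^{\ell(\om)}]=\theta^{-n}Z_n(\b)$ by the representation~\eqref{pf-exp}. On the right-hand side one has $\sum_{\k\in K(\la)}\prod_i p_i^{\k_i}=\theta^{-n}|K(\la)|$ since every term equals $\theta^{-n}$. Cancelling $\theta^{-n}$ yields $Z_n(\b)=\sum_{\la\vdash n}|K(\la)|G_n(\la)$, and since $1\le|K(\la)|\le\theta!$ uniformly in $\la$, the claim $Z_n(\b)\asymp\sum_\la G_n(\la)$ follows with constant $C=\theta!$.

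For the weighted statement, the plan is to apply Lemma~\ref{col-lem} with $p_1=p e^h$ and $p_2=\dotsb=p_\theta=p$, where $p=(e^h+\theta-1)^{-1}$ normalises these to probabilities. A direct computation shows $\sum_{i=1}^\theta p_i^{|\g|}=p^{|\g|}(e^{h|\g|}+\theta-1)$; taking the product over $\g\in\cC$ and using $\sum_\g|\g|=n$ produces a factor $p^n$ times the product appearing in~\eqref{pf-exp-2}. On the character-sum side, $\prod_i p_i^{\k_i}=p^n e^{h\k_1}$ for every $\k\in K(\la)$ (since $\sum_i\k_i=n$). Cancelling $p^n$ and recalling the definition of $Z_n(\b,h)$ yields
\[
e^{(h/\theta)n}Z_n(\b,h)=\sum_{\la\vdash n}G_n(\la)\sum_{\k\in K(\la)}e^{h\k_1}.
\]

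It then remains to apply the elementary two-sided bounds on $\sum_{\k\in K(\la)}e^{h\k_1}$ recorded just before the lemma: for $h\ge0$ one has $e^{h\la_1}\le\sum_\k e^{h\k_1}\le|K(\la)|e^{h\la_1}\le\theta!\,e^{h\la_1}$ because $\la_1=\max_i\k_i$, and for $h\le0$ analogously $e^{h\la_\theta}\le\sum_\k e^{h\k_1}\le\theta!\,e^{h\la_\theta}$ because $\la_\theta=\min_i\k_i$. Since $G_n(\la)\ge0$, summing over $\la\vdash n$ preserves the bounds, giving the stated $\asymp$ relations. There is no real obstacle in this argument; the only subtlety is keeping track of the factor $p^n$ introduced by the inhomogeneous choice of probabilities, and verifying that the constants $\theta!$ are uniform in $n$ (which they are because $\theta$ is fixed).
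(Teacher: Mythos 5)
Your proposal is correct and follows essentially the same route as the paper: apply the Colouring-lemma with $p_i=1/\theta$ (cancelling a $\theta^{-n}$ factor) for the first relation, then with $p_1=pe^h$, $p_2=\dots=p_\theta=p$, $p=(e^h+\theta-1)^{-1}$ (cancelling $p^n$) for the second, and finish with the two-sided bounds $e^{h\la_1}\le\sum_{\k\in K(\la)}e^{h\k_1}\le\theta!\,e^{h\la_1}$ for $h\ge0$ and the analogous bounds with $\la_\theta$ for $h\le0$. This is exactly the argument the paper records in the paragraphs preceding the lemma.
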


\subsection{Some representation theory}

From Lemma~\ref{pf-lem} it is clear that the probabilities
$\PP(\s\in \cT_\la)$ are important.  We now express them using the
irreducible representations of $\cS_n$.
For background on the representation theory of $\cS_n$
we refer to e.g. Fulton--Harris~\cite{fulton-harris}.

 The irreducible representations of $\cS_n$
are indexed by partitions $\mu\vdash n$.    
(In this description we temporarily omit 
our convention that partitions
have at most $\theta$ non-zero parts.)
It is convenient to
represent $\mu\vdash n$ by its Young-diagram, 
as in Figure~\ref{ytab-fig}.
We write
$U_\mu$ for the irreducible representation corresponding to 
$\mu\vdash n$, and
$\chi_\mu$ for its character.  Let $V_\la$ denote the
coset representation of the subgroup $\cT_\la$,
that is $V_\la$ is a vector space spanned
by the cosets $\pi \cT_\la$ and $\cS_n$ acts by
left multiplication.  By Young's
rule~\cite[Corollary~4.39]{fulton-harris},  
the representation $V_\la$ decomposes as a
direct sum of irreducible representations with known multiplicities:
\be\label{V-decomp}
V_\la=\bigoplus_{\mu\vdash n}K_{\mu\la} U_\mu.
\ee
Here the multiplicities $K_{\mu\la}$ are
the \emph{Kostka numbers}: 
$K_{\mu\la}$ equals the number of ways to fill the Young
diagram for $\mu$ with $\la_1$ 1's, $\la_2$ 2's
et.c.\ so that the rows are weakly increasing
and the columns are strictly increasing.
See Figure~\ref{ytab-fig} again.
\begin{figure}
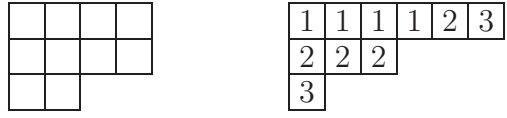

\centering
$\yng(4,4,2)$ \qquad\qquad 
$\young(111123,222,3)$
\caption{Left:  Young diagram of the partition 
$\la=(4,4,2)\vdash 10$.  Right:  diagram for
$\mu=(6,3,1)\dom\la$ filled with $\la_1=4$ 1's,
$\la_2=4$ 2's and $\la_3=2$ 3's so that rows are weakly
increasing and columns strictly increasing.}
\label{ytab-fig}
\end{figure}

We say that $\mu$ \emph{dominates} $\la$,
written $\mu\dom\la$, if for each $i$ we have
that $\mu_1+\dotsb+\mu_i\geq \la_1+\dotsb+\la_i$.
Note that $K_{\mu\la}=0$ unless $\mu\dom\la$.
In particular, if $\la$ has at most $\theta$ non-zero parts, then
$K_{\mu\la}=0$ unless $\mu$ also has at most $\theta$ non-zero parts. 
 Writing $\psi_\la$ for the character of $V_\la$ it follows
from~\eqref{V-decomp} that
\be\label{char-decomp-V}
\psi_\la=\sum_{\mu\vdash n} K_{\mu\la} \chi_\mu.
\ee
Let $\el\cdot,\cdot\er$ denote the inner product
of functions on $\cS_n$ given by
\[
\el f,g\er=\frac{1}{n!}\sum_{\s\in \cS_n}f(\s)\overline{g(\s)}.
\]
Lemma 1 in Alon--Kozma~\cite{alon-kozma} tells us that for 
a class function $f$ we have
\be
\sum_{\s\in \cT_\la} f(\s)=|\cT_\la|\el f,\psi_\la\er
\ee
so using~\eqref{char-decomp-V} we see that
\be\label{ak}
\sum_{\s\in \cT_\la} f(\s)=|\cT_\la|
\sum_{\mu\vdash n} K_{\mu\la} \el f,\chi_\mu\er.
\ee
Now let $f(\s)=\PP(\s(\om)=\s)$ as before.  
As already noted, this is a class-function.
The calculations in Lemma~1
of Berestycki--Kozma~\cite{berestycki-kozma}  show that
\be\label{bk}
\el f,\chi_\mu\er=
\tfrac{1}{n!} \tr(\hat f(\mu))
=\tfrac{1}{n!} d_\mu 
\exp\Big\{\frac\b n \binom{n}{2}[r(\mu)-1]\Big\}.
\ee
Here  $\hat f(\mu)$ denotes the
Fourier transform of $f$ at the irreducible
representation $U_\mu$, the number
$d_\mu$ is the dimension of $U_\mu$,
and finally $r(\mu)=\chi_\mu((1,2))/d_\mu$ 
is the character ratio at a transposition.
We note for future reference that
\be\label{r-eq}
\frac{\b}{n}\binom{n}{2}[r(\mu)-1]
=\frac\b2\Big[
\sum_{j=1}^\theta \frac{\mu_j(\mu_j-2j+1)}{n}
-(n-1)\Big]
\ee
see e.g.\ equation (7) in~\cite{berestycki-kozma}.

Putting together~\eqref{ak} and~\eqref{bk} gives
\be
\PP(\s\in \cT_\la)=\sum_{\s\in \cT_\la} f(\s)=
\frac{|T_\la|}{n!}\sum_{\mu\vdash n} 
d_\mu K_{\mu\la} \exp\Big\{t\binom{n}{2}[r(\mu)-1]\Big\}.
\ee
Noting that
$\frac{|\cT_\la|}{n!}=\binom{n}{\la}^{-1}$,
we obtain:
\begin{lemma}\label{G-lem}
\[
G_n(\la)=
\sum_{\mu\vdash n} 
d_\mu K_{\mu\la} \exp\Big\{\frac\b n\binom{n}{2}[r(\mu)-1]\Big\}.
\]
\end{lemma}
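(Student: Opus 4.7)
The statement is essentially a bookkeeping consequence of the identities that have just been assembled, so my plan is to simply chain them together in the right order.

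First I would recall the definition $G_n(\la)=\binom{n}{\la}\PP(\s\in\cT_\la)$ from~\eqref{G-def} and observe that, since $f(\s)=\PP(\s(\om)=\s)$ is a class function (the remark made just before~\eqref{bk}), the identity~\eqref{ak} from Alon--Kozma is directly applicable to the sum $\PP(\s\in\cT_\la)=\sum_{\s\in\cT_\la}f(\s)$. This immediately yields
\[
\PP(\s\in\cT_\la)=|\cT_\la|\sum_{\mu\vdash n}K_{\mu\la}\,\el f,\chi_\mu\er.
\]

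Next I would substitute the Fourier-transform evaluation~\eqref{bk} of Berestycki--Kozma, which gives $\el f,\chi_\mu\er=\tfrac{1}{n!}d_\mu\exp\{\tfrac\b n\binom{n}{2}[r(\mu)-1]\}$. Pulling the factor $|\cT_\la|/n!$ outside the sum, I would use the standard identity $|\cT_\la|=\la_1!\la_2!\cdots\la_\theta!$, which gives $|\cT_\la|/n!=\binom{n}{\la}^{-1}$. Multiplying through by $\binom{n}{\la}$ cancels this factor exactly and produces the claimed expression for $G_n(\la)$.

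There is no real obstacle here: all the representation-theoretic input (Young's rule~\eqref{V-decomp}, the character decomposition~\eqref{char-decomp-V}, the Alon--Kozma identity~\eqref{ak}, and the Berestycki--Kozma Fourier computation~\eqref{bk}) has been cited in the preceding subsection, and the only arithmetic needed is the factorial identity $|\cT_\la|/n!=\binom{n}{\la}^{-1}$. The lemma is thus proved by direct substitution.
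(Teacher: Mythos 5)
Your proof is correct and follows exactly the same route as the paper: apply the Alon--Kozma identity~\eqref{ak} to the class function $f$, substitute the Berestycki--Kozma evaluation~\eqref{bk}, and cancel $|\cT_\la|/n!=\binom{n}{\la}^{-1}$ against the multinomial coefficient in the definition of $G_n(\la)$.
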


For $\theta=2$ the partitions $\mu$ can 
be indexed by the length of the second row,
and it is well-known (and easy to see) that
$K_{\mu\la}=1$ when $\mu\dom\la$.  In that case
Lemma~\ref{G-lem}
is essentially~\cite[eq.~(49)]{penrose}.

\section{Convergence-results}
\label{lim-sec}

In this section will use the expressions in Lemmas~\ref{pf-lem} 
and~\ref{G-lem} to identify the limit of $\frac1n\log Z_n(\b,h)$.

\subsection{Lemmas}

We first present  convergence-results in a slightly more
general form, which we will later apply to our specific problem.
Some of the arguments in this subsection 
are strongly inspired by~\cite[Section~6]{penrose} 
and~\cite[Section~3.4]{ruelle}.

Recall that
\[
\D=\{(x_1,\dotsc,x_\theta)\in [0,1]^\theta:
x_1\geq\dotsb\geq x_\theta, \tsum x_i=1\}.
\]
For $x,y\in\D$ we write $y\dom x$ if 
$y_1+\dotsb+y_i\geq x_1+\dotsb+x_i$
for all $i$.
For $x\in\D$ we write 
\be
\D(x)=\{y\in \D: y\dom x\}.
\ee
It is not hard to see that
$\D(\tfrac1\theta,\dotsc,\tfrac1\theta)=\D$.
Also note that each $\D(x)$, and hence also $\D$, is compact and
convex. 

Write $\|\cdot\|$ for the $\oo$-norm on $\RR^\theta$,
$\|x-y\|=\max_{i=1,\dotsc,\theta} |x_i-y_i|$.
Write $d_\mathrm{H}(\cdot,\cdot)$ for the associated Hausdorff
distance between  sets in $\RR^\theta$:
\[
d_\mathrm{H}(A,B)=\inf\{\eps\geq 0: 
A\se B^\eps\mbox{ and } B\se A^\eps\}
\]
where
$A^\eps=\{x\in\RR^\theta: \|x-a\|<\eps\mbox{ for some }a\in A\}$.

The proof of the following result is given in Appendix~\ref{dh-app}.
\begin{lemma}\label{D-lem}
Let $x,y\in\D$ with $\|x-y\|\leq \eps<\theta^{-2}$.   Then 
\[
d_\mathrm{H}(\D(x),\D(y))< \theta\eps^{1/2}.
\]
\end{lemma}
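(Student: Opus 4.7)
The plan is to prove the Hausdorff bound by showing that any $u \in \D(x)$ lies within $\|\cdot\|$-distance strictly less than $\theta\eps^{1/2}$ of some $v \in \D(y)$, and symmetrically. It is convenient to reformulate the problem in cumulative form: write $a^u_i := u_1 + \dotsb + u_i$ (with $a^u_0 = 0$ and $a^u_\theta = 1$) and analogously for $x$ and $y$. Then $u \in \D$ corresponds to $a^u$ being a concave non-decreasing sequence on $\{0, \dotsc, \theta\}$ with those boundary values, and $u \dom x$ to the pointwise inequality $a^u \geq a^x$. The hypothesis $\|x - y\| \leq \eps$ gives $|a^x_i - a^y_i| \leq i\eps$, and combined with $a^u \geq a^x$ this yields the key bound $a^y_i - a^u_i \leq i\eps$ for every $i$.

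Given $u \in \D(x)$, I would then define $a^v$ to be the least concave majorant (LCM) of $b_i := \max(a^u_i, a^y_i)$ on $\{0, \dotsc, \theta\}$. Both $a^u$ and $a^y$ are non-decreasing, hence so is $b$, and the LCM of a non-decreasing sequence is itself non-decreasing (a strict decrease of $a^v$ at some index would, by concavity, propagate to the endpoint and contradict $a^v_\theta = b_\theta = 1$). The boundary values $a^v_0 = 0$ and $a^v_\theta = 1$ are attained because each is realised by an admissible concave majorant. Setting $v_i := a^v_i - a^v_{i-1}$ thus produces a legitimate element of $\D$, and since $a^v \geq b \geq a^y$, one has $v \in \D(y)$.

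The heart of the proof is bounding $a^v - a^u$ from above, which I would do by exhibiting an explicit concave majorant of $b$ with matching boundary values. Take
\[
g_i := a^u_i + \theta\eps \cdot \min(i, \theta - i).
\]
The triangle sequence $\min(i, \theta - i)$ is concave with zero boundary values, so $g$ is concave with $g_0 = 0$ and $g_\theta = 1$. The elementary inequality $\theta \min(i, \theta - i) \geq i$ (for $0 \leq i \leq \theta$) yields $g_i - a^u_i \geq i\eps \geq a^y_i - a^u_i$, hence $g \geq b$. Minimality of the LCM then gives $0 \leq a^v_i - a^u_i \leq \theta \eps \min(i, \theta - i) \leq \theta^2 \eps / 2$. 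Taking finite differences once more,
\[
|v_i - u_i| = |(a^v_i - a^u_i) - (a^v_{i-1} - a^u_{i-1})| \leq \theta^2 \eps / 2,
\]
and the hypothesis $\eps < \theta^{-2}$ yields $\theta^2 \eps / 2 < \theta \eps^{1/2}$, which finishes the argument.

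The main obstacle I anticipate is verifying carefully that the LCM has the correct monotonicity and boundary behaviour so that $v_i \geq 0$ and $v \in \D$. These are standard but slightly delicate properties of LCMs of non-decreasing sequences. Once the explicit majorant $g$ is in hand, the remainder of the proof is a direct computation that reduces to the estimate $\theta \eps^{1/2} < 1$, which is precisely the hypothesis.
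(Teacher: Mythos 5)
Your proof is correct in substance and follows a genuinely different route from the paper.  The paper proceeds directly in the coordinates $z_i$: given $z\in\D(x)$, it sets $k$ to be the last index with $z_k\geq\eps^{1/2}$, truncates $z$ after coordinate $k$, rescales, and dumps a bit of extra mass onto the first coordinate, producing an explicit $z'\in\D(y)$ with $\|z-z'\|\leq(\theta-1)\eps^{1/2}$.  You instead pass to cumulative sums, where dominance becomes pointwise comparison and membership in $\D$ becomes concavity with fixed boundary values, and then take the least concave majorant of $\max(a^u,a^y)$.  This is a cleaner structural device: the LCM automatically produces an element of $\D(y)$, and the explicit concave majorant $g$ gives a bound of order $\eps$ rather than $\eps^{1/2}$, which is actually sharper than the paper's.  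The one thing you flagged yourself -- monotonicity of the LCM -- does go through: since the constant function $1$ is a concave majorant of $b$, the LCM is $\leq 1$ everywhere, and since its last increment is $1-a^v_{\theta-1}\geq 0$ and increments of a concave sequence are non-increasing, all increments are $\geq 0$.

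There is one small slip to correct.  The inequality $\theta\min(i,\theta-i)\geq i$ is false at $i=\theta$ (it reads $0\geq\theta$).  Fortunately, at $i=\theta$ one has $g_\theta=a^y_\theta=1$ directly, so the conclusion $g\geq b$ still holds; alternatively, observe that since $a^u_\theta=a^y_\theta=1$, the bound $a^y_i-a^u_i\leq a^y_i-a^x_i$ together with $\|x-y\|\leq\eps$ actually gives $a^y_i-a^u_i\leq\min(i,\theta-i)\,\eps$, from which $g\geq b$ follows immediately using only $\theta\geq 1$ (and in fact would have let you take the smaller majorant $g_i=a^u_i+\eps\min(i,\theta-i)$).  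Either repair is one line.
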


Now let $\phi:\D\to\RR$ be any continuous function
(we will later take  $\phi=\phi_\b$).
Since $\D$ is compact, $\phi$ is uniformly continuous.
Let $\phi^{(\la)}_n(\mu)$
be a sequence of functions of partitions $\la,\mu\vdash n$ 
converging \emph{uniformly} to $\phi$ in the
following sense:   there is a sequence
$\d_n\to 0$, not depending on $\la$ or $\mu$, such that
$|\phi^{(\la)}_n(\mu)-\phi(\mu/n)|\leq\d_n$ for all $n$.

\begin{lemma}\label{penrose-lem}
If $n\to\oo$ and $\la/n\to x\in\D$ then
\[
\frac1n\log\Big(\sum_{\mu\dom\la} 
\exp\big(n\,\phi^{(\la)}_n(\mu)\big)\Big)
\to \max_{y\in \D(x)} \phi(y).
\]
The maximum is attained since $\D(x)$ is compact and $\phi$
continuous. 
\end{lemma}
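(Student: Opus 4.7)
The plan is to bracket
\[
L_n := \tfrac1n\log\Big(\tsum_{\mu\dom\la}\exp\big(n\,\phi^{(\la)}_n(\mu)\big)\Big)
\]
by matching upper and lower bounds, both of which reduce to $\max_{y\in\D(\la/n)}\phi(y)$; this quantity converges to $\max_{y\in\D(x)}\phi(y)$ by combining Lemma~\ref{D-lem} with the fact that $\phi$, being continuous on the compact set $\D$, is uniformly continuous. Throughout, write $\d_n\to0$ for the uniform error $|\phi^{(\la)}_n(\mu)-\phi(\mu/n)|\leq\d_n$ guaranteed by hypothesis, and recall that since $\la$ has at most $\theta$ parts, any $\mu\dom\la$ with $\mu\vdash n$ likewise has at most $\theta$ parts (the partial sum inequality at index $\theta$ forces equality).

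For the upper bound, I would bound the number of partitions $\mu\vdash n$ with at most $\theta$ parts by $(n+1)^\theta$, so that the sum is at most this polynomial factor times its largest summand:
\[
L_n \leq \tfrac{\theta}{n}\log(n+1) + \max_{\mu\dom\la}\phi^{(\la)}_n(\mu) \leq \tfrac{\theta}{n}\log(n+1) + \max_{y\in\D(\la/n)}\phi(y) + \d_n,
\]
using $\{\mu/n:\mu\vdash n,\,\mu\dom\la\}\se\D(\la/n)$. Letting $n\to\oo$ yields $\limsup L_n\leq\max_{y\in\D(x)}\phi(y)$.

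For the lower bound, choose $y^*\in\D(x)$ attaining the maximum of $\phi$. It suffices to exhibit integer partitions $\mu_n\vdash n$ with $\mu_n\dom\la_n$ and $\mu_n/n\to y^*$, since bounding the sum below by its $\mu_n$-term and using uniform convergence then gives $\liminf L_n\geq\phi(y^*)$. By Lemma~\ref{D-lem} there exist $y_n\in\D(\la_n/n)$ with $y_n\to y^*$, and I would round their cumulative sums: set $T_{n,0}=0$, $T_{n,\theta}=n$, $T_{n,i}=\lceil n(y_{n,1}+\dotsb+y_{n,i})\rceil$ for $0<i<\theta$, and $\tilde\mu_{n,i}=T_{n,i}-T_{n,i-1}$. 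Then $\sum_i\tilde\mu_{n,i}=n$, the partial sums of $\tilde\mu_n$ dominate those of $\la_n$ (so $\tilde\mu_n\dom\la_n$ as a composition), and $\|\tilde\mu_n/n-y_n\|=O(1/n)$.

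The main obstacle is that the composition $\tilde\mu_n$ need not itself be a partition: its entries are only weakly decreasing up to an additive $O(1)$ discrepancy at each step. The key resolution is the elementary observation that sorting any composition into weakly decreasing order can only increase each of its partial sums; hence the sorted rearrangement $\mu_n$ of $\tilde\mu_n$ is a genuine partition of $n$ that still dominates $\la_n$. Because $y^*$ is itself sorted decreasingly, the standard Lipschitz property of order statistics under the $\infty$-norm gives $\|\mu_n/n-y^*\|\to 0$, completing the construction and thereby the lower bound.
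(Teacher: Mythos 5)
Your proof is correct and follows the same overall strategy as the paper: bound the log-sum above and below, reduce both bounds to $\max_{y\in\D(\la/n)}\phi(y)$ via the polynomial bound on the number of partitions and the uniform error $\d_n$, and pass to $\max_{y\in\D(x)}\phi(y)$ using Lemma~\ref{D-lem} together with the uniform continuity of $\phi$. The upper bound is essentially identical. The lower bound, however, is where you genuinely improve on the paper. The paper asserts that ``by the triangle inequality'' one may assume $\D(\la/n)\cap\ol B_\d(x^\star)$ contains a point of the form $\mu/n$ with $\mu\vdash n$ and $\mu\dom\la$, but does not construct such a $\mu$. You fill this in: rounding the cumulative sums of a sequence $y_n\in\D(\la/n)$ with $y_n\to y^\star$ produces a composition $\tilde\mu_n\vdash n$ dominating $\la$ with $\|\tilde\mu_n/n-y_n\|=O(1/n)$, and the two observations that sorting a composition can only increase its partial sums (so domination survives) and that sorting is $1$-Lipschitz in $\ell^\infty$ (so, since $y^\star$ is already decreasing, $\|\mu_n/n-y^\star\|\to0$) convert $\tilde\mu_n$ into the required partition. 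This is a clean and complete justification of a step that the paper leaves informal; otherwise, the two arguments run in parallel.
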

\begin{proof}
We first prove an upper bound.
Since the number of partitions of $n$ into at most $\theta$ parts is
at most $n^\theta$ we have that 
\[\begin{split}
\sum_{\mu\dom\la}\exp\big(n\, \phi^{(\la)}_n(\mu)\big)&
\leq n^\theta \max_{\mu\dom\la} \exp\big(n\, \phi^{(\la)}_n(\mu)\big)\\
&\leq n^\theta \exp\big(n\, \max_{\mu\dom\la} \phi(\mu/n)+n\d_n\big)
\end{split}\]
so that
\[
\frac1n\log \Big(\sum_{\mu\dom\la}\exp\big(n\, \phi^{(\la)}_n(\mu)\big)\Big)
\leq o(1)+ \max_{\mu\dom\la} \phi(\mu/n).
\]
Let $x_n=\la/n$, then $\mu\dom\la$ is equivalent to $\mu/n\in\D(x_n)$,
so we have that
\[
\max_{\mu\dom\la} \phi(\mu/n)\leq 
\max_{y\in\D(x_n)} \phi(y)=\phi(y_n^\star)
\]
for some $y_n^\star\in\D(x_n)$.  
Now we use Lemma~\ref{D-lem}:
given any $\d>0$ we have, for $n$ large
enough, that there is some $x^\star_n\in\D(x)$ such that
$\|x_n^\star-y_n^\star\|<\d$.  Since $\phi$ is uniformly continuous we
may, given $\eps>0$, pick $\d$ so that
$\|x_n^\star-y_n^\star\|<\d$ implies 
$|\phi(x_n^\star)-\phi(y_n^\star)|<\eps$.
Then
\[
\phi(y_n^\star) \leq \phi(x_n^\star)+\eps
\leq \max_{y\in\D(x)} \phi(y)+\eps,
\]
since $x^\star_n\in\D(x)$.  This shows that
\[
\frac1n\log \Big(\sum_{\mu\dom\la}\exp\big(n\, \phi^{(\la)}_n(\mu)\big)\Big)
\leq o(1)+ \max_{y\in\D(x)} \phi(y)+\eps,
\]
for any $\eps>0$, so
\[
\limsup_{n\to\oo, \la/n\to x}
\frac1n\log \Big(\sum_{\mu\dom\la}\exp\big(n\, \phi^{(\la)}_n(\mu)\big)\Big)
\leq \max_{y\in\D(x)} \phi(y).
\]

Now for the lower bound.  Pick some  $x^\star\in\D(x)$ where
$\phi$ attains its maximum over $\D(x)$.  As before, write
$x_n=\la/n$.  Using Lemma~\ref{D-lem} as before,  given $\d>0$ we have
that $\D(x_n)$ intersects the ball $B_\d(x^\star)$ of radius $\d$
around $x^\star$ provided that $n$ is large enough.  Write 
$\ol B_\d(x^\star)$ for the closed ball.  By the triangle inequality
we may further assume that $\D(x_n)\cap\ol B_\d(x^\star)$ contains
some point of the form $\mu/n$.  Thus
\be
\begin{split}
\sum_{\mu\dom\la}\exp\big( n\, \phi^{(\la)}_n(\mu)\big)
&\geq \sum_{\mu/n\in \D(x_n)}
\exp\big( n\, \phi(\mu/n)-n\d_n\big)\\
&\geq \min_{\mu/n\in \D(x_n)\cap\ol B_\d(x^\star)}
\exp\big( n\, \phi(\mu/n)-n\d_n\big)\\
&\geq \min_{y\in \D\cap\ol B_\d(x^\star)}
\exp\big( n\, \phi(y)-n\d_n\big)\\
&= \exp\Big( n \min_{y\in \D\cap\ol B_\d(x^\star)} \phi(y)-n\d_n\Big).
\end{split}
\ee
Hence
\[
\frac1n\log\Big(\sum_{\mu\dom\la}\exp\big( n\, \phi^{(\la)}_n(\mu)\big)\Big)
\geq \min_{y\in \D\cap\ol B_\d(x^\star)} \phi(y)-\d_n.
\]
By the uniform continuity of $\phi$, given $\eps>0$ we may pick $\d$
small enough such that 
\[
\min_{y\in \D\cap\ol B_\d(x^\star)} \phi(y)\geq \phi(x^\star)-\eps.
\]
This gives
\[
\limsup_{n\to\oo, \la/n\to x}
\frac1n\log\Big(\sum_{\mu\dom\la}\exp\big( n\, \phi^{(\la)}_n(\mu)\big)\Big)
\geq \phi(x^\star)-\eps,
\]
which proves the claim.
\end{proof}

The next result may be established 
straightforwardly using Lemma~\ref{D-lem}:

\begin{lemma}\label{g-lem}
The function  $g:\D\to\RR$ given by
$g(x)=\max_{y\dom x} \phi(y)$
is continuous.
\end{lemma}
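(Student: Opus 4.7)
\medskip

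The plan is to combine the Hausdorff continuity of the set-valued map $x\mapsto\D(x)$ (which is exactly the content of Lemma~\ref{D-lem}) with the uniform continuity of $\phi$ on the compact set $\D$. Note first that $g$ is well-defined and finite, since $\D(x)$ is compact and non-empty (it always contains $(1,0,\dotsc,0)$) and $\phi$ is continuous on $\D$.

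I would fix $x,y\in\D$ with $\|x-y\|\leq\eps$, for an $\eps<\theta^{-2}$ that will be chosen small. Select $y^\star\in\D(y)$ at which the maximum defining $g(y)$ is attained, so that $g(y)=\phi(y^\star)$. By Lemma~\ref{D-lem} applied with the roles of $x$ and $y$ swapped, $d_{\mathrm H}(\D(x),\D(y))<\theta\eps^{1/2}$, hence there exists some $x^\star\in\D(x)$ with $\|x^\star-y^\star\|<\theta\eps^{1/2}$. Given any $\eta>0$, uniform continuity of $\phi$ on $\D$ lets us shrink $\eps$ so that $\|x^\star-y^\star\|<\theta\eps^{1/2}$ forces $|\phi(x^\star)-\phi(y^\star)|<\eta$. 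Since $x^\star\in\D(x)$, this gives
\[
g(x)\;\geq\;\phi(x^\star)\;\geq\;\phi(y^\star)-\eta\;=\;g(y)-\eta.
\]
Interchanging the roles of $x$ and $y$ yields the reverse inequality, so $|g(x)-g(y)|<\eta$, which proves continuity of $g$ at $x$.

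Since the nontrivial ingredient (continuity of $x\mapsto\D(x)$ in Hausdorff distance) has been isolated into Lemma~\ref{D-lem}, and $\phi$ is uniformly continuous on the compact set $\D$, the remainder of the argument is essentially formal. The only subtlety to watch for is ensuring the chosen $\eps$ satisfies $\eps<\theta^{-2}$ so that Lemma~\ref{D-lem} applies, but this is harmless since we are free to take $\eps$ as small as we wish.
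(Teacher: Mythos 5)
Your argument is correct and is precisely the route the paper intends: the paper omits the proof, remarking only that the lemma ``may be established straightforwardly using Lemma~\ref{D-lem}'', and your combination of the Hausdorff estimate with the uniform continuity of $\phi$ on the compact set $\D$ is exactly that straightforward argument, carried out correctly in both directions.
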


We next present a slight extension of Lemma~\ref{penrose-lem}.
We assume that $\phi$ and 
$\phi^{(\la)}_n(\mu)$ are as before.
Write $y=(y_1,\dotsc,y_\theta)\in\RR^\theta$ and 
$y\cdot x=\sum y_ix_i$ for the usual  product.

\begin{lemma}\label{penrose-lem-2}
For any $y\in\RR^\theta$ we have as $n\to\oo$ that
\[
\frac1n\log\Big(\sum_{\la\vdash n} e^{y\cdot \la}
\sum_{\mu\dom\la}\exp\big( n\, \phi^{(\la)}_n(\mu)\big)\Big)
\to \max_{x\in\D}\big( y \cdot x + g(x)\big)
\]
where $g$ is the function in Lemma~\ref{g-lem}.
\end{lemma}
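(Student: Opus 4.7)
The plan is to bootstrap Lemma~\ref{penrose-lem} to the weighted outer sum. The crucial enabling observation is that the number of partitions of $n$ into at most $\theta$ parts is $O(n^\theta)$, which is polynomial in $n$ and therefore negligible on the $\tfrac{1}{n}\log$ scale. This lets me replace sums by maxima at the cost of $o(1)$ errors.

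For the upper bound I would combine the uniform approximation $|\phi^{(\la)}_n(\mu)-\phi(\mu/n)|\leq\d_n$ with the elementary inequality $\phi(\mu/n)\leq g(\la/n)$, which holds whenever $\mu\dom\la$ with $\mu,\la\vdash n$, since then $\mu/n\in\D(\la/n)$. This yields
\[
\sum_{\mu\dom\la}\exp\big(n\,\phi^{(\la)}_n(\mu)\big)\leq n^\theta\exp\big(n\,g(\la/n)+n\d_n\big)
\]
uniformly in $\la\vdash n$. Multiplying by $e^{y\cdot\la}=\exp(n\,y\cdot(\la/n))$ and summing over the at most $n^\theta$ admissible $\la$, I obtain
\[
\sum_{\la\vdash n}e^{y\cdot\la}\sum_{\mu\dom\la}\exp\big(n\,\phi^{(\la)}_n(\mu)\big)\leq n^{2\theta}\exp\Big(n\max_{\la\vdash n}\big(y\cdot(\la/n)+g(\la/n)\big)+n\d_n\Big).
\]
Since each $\la/n$ lies in $\D$, taking $\tfrac{1}{n}\log$ bounds the left-hand side by $\max_{x\in\D}(y\cdot x+g(x))+o(1)$.

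For the lower bound I would first pick $x^\star\in\D$ attaining $\max_{x\in\D}(y\cdot x+g(x))$; this maximum exists because $\D$ is compact and $y\cdot x+g(x)$ is continuous (using Lemma~\ref{g-lem} for continuity of $g$). Then I select a sequence of partitions $\la^{(n)}\vdash n$ with at most $\theta$ non-zero parts such that $\la^{(n)}/n\to x^\star$; the natural rounding scheme $\la^{(n)}_i=\lfloor nx^\star_i\rfloor$ (together with a suitable distribution of the $O(1)$ deficit so that the parts remain ordered and sum to $n$) gives $\|\la^{(n)}/n-x^\star\|=O(1/n)$. Retaining only this single $\la^{(n)}$ in the outer sum and applying Lemma~\ref{penrose-lem} to the inner sum,
\[
\tfrac{1}{n}\log\Big(\sum_{\la\vdash n}e^{y\cdot\la}\sum_{\mu\dom\la}\exp\big(n\,\phi^{(\la)}_n(\mu)\big)\Big)\geq y\cdot(\la^{(n)}/n)+\tfrac{1}{n}\log\sum_{\mu\dom\la^{(n)}}\exp\big(n\,\phi^{(\la^{(n)})}_n(\mu)\big),
\]
whose right-hand side converges to $y\cdot x^\star+g(x^\star)=\max_{x\in\D}(y\cdot x+g(x))$.

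The two estimates together deliver the claimed convergence. There is no serious obstacle here; the main (mild) technicality is the lattice approximation of the maximiser $x^\star$ by some $\la^{(n)}/n$ with $\la^{(n)}\vdash n$ having at most $\theta$ parts, which needs to be set up so that Lemma~\ref{penrose-lem} applies verbatim. Everything else is a direct packaging of earlier results, with the polynomial partition count absorbing the outer sum on the exponential scale.
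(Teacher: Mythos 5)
Your proposal is correct and follows essentially the same route as the paper: upper bound by replacing the double sum with a double maximum at a polynomial cost, lower bound by discarding all but one well-chosen $\la^{(n)}$ with $\la^{(n)}/n\to x^\star$ and invoking Lemma~\ref{penrose-lem} on the inner sum. The only cosmetic difference is that the paper works with an $\eps$-approximate maximizer $\tilde x$ of $y\cdot x+g(x)$ while you take the exact maximizer (legitimate, since $g$ is continuous and $\D$ compact); your added detail on the lattice rounding of $x^\star$ is fine but was left implicit in the paper.
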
  
\begin{proof}
Write $m(y)=\max_{x\in\D}\big( y \cdot x + g(x)\big)$.
Bounding the number of partitions by $n^\theta$ as before, we see
that 
\[
\begin{split}
\sum_{\la\vdash n} e^{y\cdot \la}
\sum_{\mu\dom\la}\exp\big( n\, \phi^{(\la)}_n(\mu)\big)
&\leq n^{2\theta} \cdot \max_{\la\vdash n} \max_{\mu\dom\la}
\exp\big( y\cdot\la+ n\, \phi(\mu/n)+n\d_n\big)\\
&\leq n^{2\theta} \cdot \max_{\la\vdash n} \max_{z\in\D(\la/n)}
\exp\big( y\cdot\la+ n\, \phi(z)+n\d_n\big)\\
&\leq n^{2\theta} \cdot \max_{x\in\D} \max_{z\in\D(x)}
\exp\big( ny\cdot x+ n\, \phi(z)+n\d_n\big)\\
&\leq n^{2\theta} \cdot 
\exp\big( \max_{x\in\D} \big\{ny\cdot x+n g(x)\big\}+n\d_n\big)
\end{split}
\]
Thus
\[
\frac1n\log \Big(
\sum_{\la\vdash n} e^{y\cdot \la}
\sum_{\mu\dom\la}\exp\big( n\, \phi^{(\la)}_n(\mu)\big)
\Big)\leq m(y)+o(1).
\]
For the lower bound, note that given $\d>0$ we
 may find $\tilde x\in\D$ such that
\[
y\cdot \tilde x+g(\tilde x) \geq m(y)-\d.
\]
We may also find a sequence $\tilde\la\vdash n$ such that
$\tilde\la/n\to \tilde x$.  Clearly
\[
\sum_{\la\vdash n} e^{y\cdot\la}
\sum_{\mu\dom\la}\exp\big( n\, \phi^{(\la)}_n(\mu)\big)
\geq e^{y\cdot\tilde \la}
\sum_{\mu\dom\tilde\la}
\exp\big( n\, \phi^{(\tilde\la)}_n(\mu)\big)
\]
and hence
\[
\frac1n\log \Big(
\sum_{\la\vdash n} e^{y\cdot\la}
\sum_{\mu\dom\la}\exp\big( n\, \phi^{(\la)}_n(\mu)\big)
\Big)\geq 
\frac{y\cdot \tilde\la}{n}+
\frac1n\log \Big(
\sum_{\mu\dom\tilde\la}\exp\big( n\, \phi^{(\tilde\la)}_n(\mu)\big)
\Big).
\]
By Lemma~\ref{penrose-lem} the right-hand-side converges to 
\[
y\cdot \tilde x+g(\tilde x) \geq m(y)-\d.
\]
This proves the claim.
\end{proof}

\subsection{The free energy}

From now on we let $\phi=\phi_\b:\D\to\RR$ be the function given
in~\eqref{phi-def}, i.e.\ 
$\phi_\b(x)=\frac\b2\big(\sum_{i=1}^\theta x_i^2-1\big)
-\sum_{i=1}^\theta x_i \log x_i$.
Note that $\phi_\b$ is continuous.
We write $g_\b(x)=\max_{y\dom x}\phi_\b(y)$ and
we define
\be\label{zbh-def}
z(\b,h)=\left\{
\begin{array}{ll}
\max_{x\in\D}\big(h(x_1-\tfrac1\theta)+g_\b(x)\big),
& \mbox{if } h\geq 0,\\
\max_{x\in\D}\big(h(x_\theta-\tfrac1\theta)+g_\b(x)\big),
& \mbox{if } h\leq 0.
\end{array}\right.
\ee
Note that $x_1-\tfrac1\theta\geq0$ and
$x_\theta-\tfrac1\theta\leq0$.

The following theorem
contains Theorem~\ref{free-en-thm} as the case $h=0$.
\begin{theorem}\label{free-energy-thm}
We have that
\[
\tfrac1n\log G_n(\la)\to g_\b(x),
\mbox{ as }n\to\oo\mbox{ and } \la/n\to x,
\]
and for $h\in\RR$ that
\[
\tfrac1n\log Z_n(\b,h)\to z(\b,h),
\mbox{ as }n\to\oo.
\]
\end{theorem}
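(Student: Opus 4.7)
The plan is to apply the asymptotic Lemmas~\ref{penrose-lem} and~\ref{penrose-lem-2} to the explicit sums provided by Lemmas~\ref{pf-lem} and~\ref{G-lem}. Setting
\[
\phi^{(\la)}_n(\mu) = \tfrac{1}{n}\log(d_\mu K_{\mu\la}) + \tfrac{1}{n}\cdot\tfrac{\b}{n}\binom{n}{2}[r(\mu)-1],
\]
Lemma~\ref{G-lem} reads $G_n(\la)=\sum_{\mu\dom\la}\exp(n\,\phi^{(\la)}_n(\mu))$, where the sum is effectively restricted to $\mu$ with at most $\theta$ parts since $K_{\mu\la}=0$ otherwise.

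The core step is to verify that $\phi^{(\la)}_n(\mu)\to\phi_\b(\mu/n)$ uniformly in $\la,\mu$, i.e.\ that there exists $\d_n\to0$ with $|\phi^{(\la)}_n(\mu)-\phi_\b(\mu/n)|\leq\d_n$ for all $\la,\mu\vdash n$ having at most $\theta$ parts. This requires three ingredients: (i) the hook-length formula together with Stirling's approximation yields $\tfrac{1}{n}\log d_\mu = -\sum_j(\mu_j/n)\log(\mu_j/n)+O(\log n/n)$ uniformly (with the convention $0\log0=0$); (ii) $K_{\mu\la}$ is bounded by the number of semistandard Young tableaux of shape $\mu$ with entries in $\{1,\dotsc,\theta\}$, which is polynomial in $n$ (by Weyl's dimension formula applied to $\mathrm{GL}_\theta$), hence $\tfrac{1}{n}\log K_{\mu\la}=O(\log n/n)$; and (iii) substituting~\eqref{r-eq} gives $\tfrac{1}{n}\cdot\tfrac{\b}{n}\binom{n}{2}[r(\mu)-1]=\tfrac{\b}{2}\bigl(\sum_j(\mu_j/n)^2-1\bigr)+O(1/n)$ uniformly. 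Summing the three contributions yields a uniform error $\d_n=O(\log n/n)$.

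With uniform convergence established, Lemma~\ref{penrose-lem} immediately gives the first claim: if $\la/n\to x$, then $\tfrac{1}{n}\log G_n(\la)\to\max_{y\dom x}\phi_\b(y)=g_\b(x)$. For the second claim, I combine Lemma~\ref{pf-lem} with Lemma~\ref{penrose-lem-2}. For $h>0$, apply Lemma~\ref{penrose-lem-2} with $y=(h,0,\dotsc,0)$, so that $y\cdot\la=h\la_1$ and
\[
\tfrac{1}{n}\log\Big(\sum_{\la\vdash n}e^{h\la_1}G_n(\la)\Big)\to\max_{x\in\D}(hx_1+g_\b(x));
\]
subtracting the prefactor $h/\theta$ coming from Lemma~\ref{pf-lem} produces exactly $z(\b,h)$ as defined in~\eqref{zbh-def}. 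The cases $h<0$ and $h=0$ are handled analogously with $y=(0,\dotsc,0,h)$ and $y=0$ respectively (the latter recovering Theorem~\ref{free-en-thm}, since $\D(\tfrac1\theta,\dotsc,\tfrac1\theta)=\D$ forces $\max_x g_\b(x)=\max_{y\in\D}\phi_\b(y)$).

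The main obstacle is ingredient~(i): because the entropy function $y\mapsto -y\log y$ has infinite derivative at $0$, extracting a uniform Stirling estimate for $d_\mu$ is delicate when some $\mu_j$ are much smaller than $n$. However, continuity and boundedness of $-y\log y$ on $[0,1]$ together with the uniform remainder $\log m!=m\log m-m+O(\log m)$ for $m\geq 1$, applied via the determinantal form of the hook-length formula, suffice to control all cases cleanly.
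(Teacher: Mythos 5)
Your proposal follows essentially the same route as the paper: the same choice of $\phi^{(\la)}_n(\mu)$ combining the character-ratio term~\eqref{r-eq}, $\tfrac1n\log d_\mu$ and $\tfrac1n\log K_{\mu\la}$, a uniform $O(\log n/n)$ error estimate, and then Lemmas~\ref{penrose-lem} and~\ref{penrose-lem-2} applied via Lemmas~\ref{pf-lem} and~\ref{G-lem} with $y=(h,0,\dotsc,0)$ resp.\ $(0,\dotsc,0,h)$. The only cosmetic difference is the polynomial bound on the Kostka numbers (you invoke the semistandard-tableau count for $\mathrm{GL}_\theta$, the paper counts row fillings directly to get $K_{\mu\la}\leq(n+1)^{\theta^2}$), and your treatment of $d_\mu$ via the determinantal form of the dimension formula plus Stirling matches the paper's.
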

\begin{proof}
We will use  Lemmas~\ref{penrose-lem} and~\ref{penrose-lem-2} with  
\be
\phi^{(\la)}_n(\mu)=\frac{\b}{n^2}\binom{n}{2}
[r(\mu)-1]+\tfrac1n \log d_\mu
+\tfrac1n \log K_{\mu\la}.
\ee
Due to Lemmas~\ref{pf-lem} and~\ref{G-lem}
it suffices to establish the uniform convergence of
$\phi^{(\la)}_n(\mu)$ to $\phi=\phi_\b$.  First note that 
$K_{\mu\la}\leq (n+1)^{\theta^2}$.  Indeed, for each row of $\mu$ we must
choose the number of 1's, the number of 2's etc.  Thus there are
certainly at most $(\la_1+1)\dotsb(\la_\theta+1)$ choices for each row,
and thus
\[
K_{\mu\la}\leq [(\la_1+1)\dotsb(\la_\theta+1)]^\theta\leq (n+1)^{\theta^2},
\]
as claimed.  Defining
\[
\phi_n(\mu)=\frac{\b}{n^2}\binom{n}{2}
[r(\mu)-1]+\tfrac1n \log d_\mu
\]
we thus have that
\[
|\phi^{(\la)}_n(\mu)-\phi(\mu/n)|\leq
 |\phi_n(\mu)-\phi(\mu/n)|
+\tfrac{\theta^2}{n}\log (n+1).
\]
Now by~\eqref{r-eq}  we have
\[
\frac{\b}{n^2}\binom{n}{2}[r(\mu)-1]
=\frac\b2\Big[
\sum_{j=1}^\theta \frac{\mu_j(\mu_j-2j+1)}{n^2}
-\frac{n-1}{n}\Big]
\]
and we have that
\[
\Big|
\sum_{j=1}^\theta \frac{\mu_j(\mu_j-2j+1)}{n^2}
-\sum_{j=1}^\theta\big(\frac{\mu_j}{n}\big)^2\Big|
\leq \sum_{j=1}^\theta \frac{\mu_j}{n}
\big(\frac{2j-1}{n}\big)\leq \frac{2\theta-1}{n}.
\]
Next,  (4.11) on page 50 of~\cite{fulton-harris} 
gives that 
\[
\log d_\mu=\log\Big(\frac{n!}{m_1!\dotsb m_k!} 
\prod_{1\leq i<j\leq k}(m_i-m_j)\Big)
\]
where $m_i=\mu_i+k-i$ and $k$ is the number of 
\emph{nonzero} parts of
$\mu$.  Thus
\[\begin{split}
&\Big|\frac1n\log d_\mu-\frac1n\log\binom n\mu\Big|\leq
\frac1n\log \prod_{1\leq i<j\leq k}(m_i-m_j)\\
&+\frac1n\log [(\mu_1+k-1)\dotsb(\mu_1+1)
  (\mu_2+k-2)\dotsb (\mu_2+1)\dotsb (\mu_{k-1}+1)]\\
&\leq \frac1n\log (n+\theta-1)^{\theta^2}
+\frac1n\log (n+\theta-1)^\theta.
\end{split}\]
Thus it suffices to bound
\[
\Big|\frac1n\log\binom n\mu-
\Big(-\sum_{j=1}^\theta \frac{\mu_j}{n}\log\frac{\mu_j}{n}\Big)\Big|.
\]
But by Stirling's formula
\[
\binom{n}{\mu}\asymp 
\Big(\frac{n}{\prod_{j=1}^\theta \mu_j}\Big)^{1/2}
\prod_{j=1}^\theta \Big(\frac{n}{\mu_j}\Big)^{\mu_j}
\]
so that
\[
\Big|\frac1n\log\binom n\mu-
\Big(-\sum_{j=1}^\theta \frac{\mu_j}{n}\log\frac{\mu_j}{n}\Big)\Big|
\leq \Big|\frac1n \log 
\Big(\frac{n}{\prod_{j=1}^\theta \mu_j}\Big)^{1/2}\Big|
+\frac Cn.
\]
The right-hand-side is at most $\tfrac{C'}{n} \log n$.
This proves the result.
\end{proof}

\section{The phase-transition}
\label{pt-sec}

In this last section we prove Theorem~\ref{mag-thm} and 
Proposition~\ref{X-prop}.
Recall  $\phi_\b$ and $z(\b,h)$
defined in~\eqref{phi-def} and~\eqref{zbh-def},
respectively.

\subsection{Left and right derivatives of $z(\b,h)$ at $h=0$}

Let $x^\uparrow(\b)\in\D$ denote a maximizer of $\phi_\b$ 
which maximizes the
first coordinate.  That is, among the maximizers $x$ of
$\phi_\b$ we pick one for which $x_1$ is maximal.  
Similarly, let $x^\downarrow(\b)$ denote a maximizer 
of $\phi_\b$ which
\emph{minimizes} the last coordinate $x_\theta$.
Note that $x^\uparrow$ and $x^\downarrow$ depend on
$\b$, though we do not always write this explicitly.

The left and right derivatives of $z(\b,h)$ at $h=0$
are given by
\[
z^+(\b)=\lim_{h\downarrow0}\frac{z(\b,h)-z(\b,0)}{h},\qquad
z^-(\b)=\lim_{h\uparrow0}\frac{z(\b,h)-z(0)}{h}.
\]
We will show:
\begin{theorem}\label{der-thm}
\[
z^+(\b)=x_1^\uparrow(\b)-\tfrac1\theta,\qquad
z^-(\b)=x_\theta^\downarrow(\b)-\tfrac1\theta.
\]
\end{theorem}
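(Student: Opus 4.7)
The plan is to first simplify the formula~\eqref{zbh-def} for $z(\b,h)$, reducing the nested maximum to a single one, and then to read off the one-sided derivatives at $h=0$ from the resulting supremum of affine functions.

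First I would unfold $g_\b(x) = \max_{y\dom x}\phi_\b(y)$ to rewrite, for $h\geq 0$,
\[
z(\b,h) = \max_{x\in\D}\,\max_{y\dom x}\bigl[h(x_1-\tfrac{1}{\theta}) + \phi_\b(y)\bigr] = \max_{y\in\D}\,\max_{x\in\D:\,y\dom x}\bigl[h(x_1-\tfrac{1}{\theta}) + \phi_\b(y)\bigr],
\]
swapping the order of the two maxima. For each fixed $y\in\D$, $y\dom x$ forces $x_1\leq y_1$, while $x=y$ is admissible and saturates this; since $h\geq 0$, the inner maximum over $x$ is attained at $x=y$, yielding
\[
z(\b,h) = \max_{y\in\D}\bigl[h(y_1-\tfrac{1}{\theta}) + \phi_\b(y)\bigr], \qquad h\geq 0.
\]
Symmetrically, $y\dom x$ is equivalent (via $x_1+\dotsb+x_{\theta-1}\leq y_1+\dotsb+y_{\theta-1}$) to $x_\theta\geq y_\theta$ together with the other dominance constraints, so for $h\leq 0$ the same reasoning gives
\[
z(\b,h) = \max_{y\in\D}\bigl[h(y_\theta-\tfrac{1}{\theta}) + \phi_\b(y)\bigr], \qquad h\leq 0.
\]

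With this reformulation $z(\b,\cdot)$ is a pointwise supremum of affine functions over the compact set $\D$, hence convex on $\RR$ (and therefore continuous). I would then compute $z^+(\b)$ by two matching inequalities. For the lower bound, substituting $y=x^\uparrow(\b)$ into the $h\geq 0$ formula gives
\[
z(\b,h)\geq h\bigl(x_1^\uparrow-\tfrac{1}{\theta}\bigr)+z(\b,0),
\]
hence $z^+(\b)\geq x_1^\uparrow-\tfrac{1}{\theta}$. For the upper bound, for each $h>0$ pick a maximizer $y_h\in\D$. Since $\phi_\b(y_h)\leq z(\b,0)$,
\[
\frac{z(\b,h)-z(\b,0)}{h} = (y_{h,1}-\tfrac{1}{\theta})+\frac{\phi_\b(y_h)-z(\b,0)}{h}\leq y_{h,1}-\tfrac{1}{\theta}.
\]
The identity $\phi_\b(y_h)=z(\b,h)-h(y_{h,1}-\tfrac{1}{\theta})$, together with continuity of $z(\b,\cdot)$ and boundedness of $y_{h,1}$, forces $\phi_\b(y_h)\to z(\b,0)$ as $h\downarrow 0$. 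Any subsequential limit $y^*$ of $(y_h)$ (which exists by compactness of $\D$) therefore satisfies $\phi_\b(y^*)=z(\b,0)$, so $y^*$ is a maximizer of $\phi_\b$ and $y_1^*\leq x_1^\uparrow$ by the definition of $x^\uparrow$. This yields $\limsup_{h\downarrow 0}y_{h,1}\leq x_1^\uparrow$, and combined with the display gives $z^+(\b)\leq x_1^\uparrow-\tfrac{1}{\theta}$.

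The formula for $z^-(\b)$ follows by the symmetric argument applied to the $h\leq 0$ simplification: substituting $y=x^\downarrow(\b)$ and dividing by the negative quantity $h$ gives $z^-(\b)\leq x_\theta^\downarrow-\tfrac{1}{\theta}$, while any subsequential limit of maximizers $y_h$ lies in $\arg\max\phi_\b$ and has last coordinate at least $x_\theta^\downarrow$, producing the reverse inequality. The only non-routine step is the first one — collapsing the double maximum. The crucial observation is that $y\dom y$ always holds, so the inner-variable extremal values $x_1=y_1$ and $x_\theta=y_\theta$ (under $y\dom x$) are simultaneously attained at the diagonal $x=y$; after that reduction everything is a standard envelope-type computation for a supremum of affine functions parametrised by a compact set.
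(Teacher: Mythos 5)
Your proof is correct, and it takes a genuinely different route from the paper's. The key idea you introduce, which the paper does not use, is to collapse the nested maximum: you observe that for $h\geq 0$ the inner maximization of $h(x_1-\tfrac1\theta)$ over $\{x:y\dom x\}$ is attained at $x=y$ (since $y\dom x$ forces $x_1\leq y_1$ and $y\dom y$), giving the clean identity $z(\b,h)=\max_{y\in\D}\bigl[h(y_1-\tfrac1\theta)+\phi_\b(y)\bigr]$. This turns the problem into a standard envelope computation for a supremum of affine functions of $h$, which also gives you convexity on $[0,\infty)$ for free and hence both existence of the one-sided derivative and the continuity $z(\b,h)\to z(\b,0)$ as $h\downarrow0$ that your upper-bound argument relies on. The paper instead keeps the $g_\b$ structure, proves the lower bound by evaluating at $x^\uparrow$, and for the upper bound argues by contradiction: if maximizers $x(h_i)$ stayed in $A_\eps=\{x_1\geq x_1^\uparrow+\eps\}$, then using $\D(x)\se A_\eps$ for $x\in A_\eps$ one gets $g_\b(x(h_i))\leq \phi_\b(x^\uparrow)-\d$, forcing $f(x(h_i),h_i)\to-\infty$. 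Your version replaces that contradiction with a compactness/subsequential-limit argument, which is cleaner because after your simplification the bound $\phi_\b(y_h)\leq z(\b,0)$ is immediate and you don't need the set-inclusion $\D(x)\se A_\eps$. One small imprecision: you write that $z(\b,\cdot)$ is ``convex on $\RR$'' as a supremum of affine functions, but since your simplified formula differs for $h\geq0$ and $h\leq0$, this only directly yields convexity on each half-line; that is all you use, though, so nothing breaks. (Global convexity of $z(\b,\cdot)$ does hold, but by a different argument -- $\log Z_n(\b,\cdot)$ is convex for each $n$ -- which neither you nor the paper invokes.)
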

\begin{proof}
We prove the  claim about $z^+(\b)$;  
the argument for $z^-(\b)$ is similar.
First note that $z(\b,0)=\phi_\b(x^\uparrow)$ and so
\[
\frac{z(\b,h)-z(\b,0)}{h}=\max_{x\in\D} f(x,h), 
\]
where 
\[
f(x,h)=x_1-\tfrac1\theta +\frac{g_\b(x)-\phi_\b(x^\uparrow)}{h}.
\]
We have that $f(x^\uparrow,h)=x_1^\uparrow-\tfrac1\theta$, 
since $g_\b(x^\uparrow)=\phi_\b(x^\uparrow)$, and
thus 
\[
\frac{z(\b,h)-z(\b,0)}{h}\geq x_1^\uparrow-\tfrac1\theta
\mbox{ for all }h>0.
\]
Also, $f$ is continuous as a function on $\D\times(0,\oo)$, thus for
each $h$ it attains its maximum at some point $x(h)\in\D$.
Since $g_\b(x)\leq\phi_\b(x^\uparrow)$ for all $x\in\D$ it follows that 
\[
x_1^\uparrow-\tfrac1\theta\leq f(x(h),h)\leq x_1(h)-\tfrac1\theta,
\mbox{ for all }h>0.
\]
It thus suffices to show 
that $x_1(h)\to x_1^\uparrow$ as $h\downarrow0$.

If not then there is some $\eps>0$ and some sequence
$h_i\downarrow0$ such that $x(h_i)\in A_\eps$ for all $i$, where
\[
A_\eps=\{x\in\D:x_1\geq x_1^\uparrow+\eps\}.
\]
Note that there is some $\d>0$ such that 
$\phi_\b(x)\leq\phi_\b(x^\uparrow)-\d$ for all $x\in A_\eps$, 
since $\phi_\b$ is continuous and $A_\eps$ compact.
Also note that if $x\in A_\eps$ then $\D(x)\se A_\eps$, by
the definition of $\dom$.  
Thus $g_\b(x(h_i))\leq\phi_\b(x^\uparrow)-\d$ for
all $i$.  But then
\[
f(x(h_i),h_i)= x_1(h_i)-\tfrac1\theta
+\frac{g_\b(x(h_i))-\phi_\b(x^\uparrow)}{h_i}
\leq 1-\tfrac1\theta-\frac{\d}{h_i}\to -\oo.
\]
This contradicts the fact that 
$f(x,h)\geq x^\uparrow_1-\tfrac1\theta$ for all $x\in\D$.
Hence it must be the case that $x_1(h)\to x_1^\uparrow$, as claimed.
\end{proof}

\subsection{The critical point}

In light of Theorem~\ref{der-thm}, the following result 
implies Theorem~\ref{mag-thm}.
Recall that 
$\b_\crit(\theta):=
2\big(\frac{\theta-1}{\theta-2}\big)\log(\theta-1)$
for $\theta\geq3$
and $\b_\crit(2)=2$.

\begin{theorem}\label{critval-thm}
\hspace{1cm}\\
If $\b<\b_\crit$, or $\theta=2$ and $\b=\b_\crit$, 
then $x^\uparrow_1=x^\downarrow_\theta=\tfrac1\theta$.

\noindent
If $\b>\b_\crit$, or $\theta\geq3$ and $\b=\b_\crit$, 
then $x^\uparrow_1>\tfrac1\theta$
and $x^\downarrow_\theta<\tfrac1\theta$.
\end{theorem}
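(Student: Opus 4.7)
The plan is to analyze the variational problem $\max_{x \in \D} \phi_\b(x)$ directly. Since $\phi_\b$ is symmetric under permutations of the coordinates (the ordering built into $\D$ merely selects a representative), I would work on the unordered simplex and apply Lagrange multipliers: any interior critical point satisfies $\b x_j - \log x_j = \lambda$ for all $j$. Because the real function $t \mapsto \b t - \log t$ attains each value at most twice, every such critical point has coordinates taking at most two distinct values. The candidates are thus the uniform point $(1/\theta, \dotsc, 1/\theta)$ and, for each $k \in \{1, \dotsc, \theta-1\}$, configurations with $k$ entries equal to some $a$ and $\theta - k$ entries equal to $b < a$, subject to $ka + (\theta-k)b = 1$ and $\b(a-b) = \log(a/b)$.

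Among the non-uniform candidates I would argue (as in the Curie--Weiss--Potts analysis) that $\phi_\b$ is globally maximized on the $k = 1$ branch, reducing the problem to the one-variable function
\[
\psi_\b(a) = \phi_\b(a, b, \dotsc, b), \qquad b = (1-a)/(\theta-1), \qquad a \in [1/\theta, 1].
\]
A short calculation gives $\psi_\b'(a) = \b(a-b) - \log(a/b)$ and $\psi_\b''(1/\theta) = \theta(\b-\theta)/(\theta-1)$, so the uniform point loses local stability only at $\b = \theta$. For $\theta = 2$ the critical equation reduces to $\b(2a-1) = \log(a/(1-a))$; since the right-hand side has slope $4$ at $a = 1/2$ and is strictly convex on $(1/2, 1)$, a non-trivial root $a^\star > 1/2$ emerges if and only if $\b > 2$, yielding $\b_\crit(2) = 2$ and a continuous transition consistent with~\cite{penrose, toth-bec}.

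For $\theta \geq 3$ the transition is first order: $\psi_\b$ develops a secondary local maximum at some $a^\star(\b) > 1/\theta$ strictly before the uniform becomes unstable at $\b = \theta$. The critical $\b_\crit$ is defined as the smallest $\b$ for which there exists $a^\star > 1/\theta$ satisfying both $\psi_\b'(a^\star) = 0$ and $\psi_\b(a^\star) = \psi_\b(1/\theta)$. Direct substitution shows this simultaneous system is solved by $a^\star = (\theta-1)/\theta$ and $b^\star = 1/(\theta(\theta-1))$, whence
\[
\b_\crit = \frac{\log(a^\star/b^\star)}{a^\star-b^\star} = \frac{2\log(\theta-1)}{(\theta-2)/(\theta-1)} = 2\,\frac{\theta-1}{\theta-2}\log(\theta-1).
\]
At $\b = \b_\crit$ both $(1/\theta, \dotsc, 1/\theta)$ and $(a^\star, b^\star, \dotsc, b^\star)$ are global maxima of $\phi_\b$ on $\D$; since $a^\star > 1/\theta > b^\star$, the definitions give $x_1^\uparrow = a^\star > 1/\theta$ and $x_\theta^\downarrow = b^\star < 1/\theta$. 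An envelope-theorem computation shows that $\psi_\b(a^\star(\b)) - \psi_\b(1/\theta)$ is strictly increasing in $\b$ along the non-trivial branch, so for $\b > \b_\crit$ the non-trivial branch strictly dominates and continues to satisfy $a(\b) > 1/\theta > b(\b)$; for $\b < \b_\crit$ the uniform is the unique maximizer and $x_1^\uparrow = x_\theta^\downarrow = 1/\theta$.

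The main obstacle is justifying the reduction to the $k = 1$ branch in the second paragraph: a priori $\phi_\b$ could be maximized on a two-valued critical configuration with $2 \leq k \leq \theta - 2$, and ruling this out requires either a careful direct comparison of the values along the different critical branches parametrized by $k$, or an appeal to the Curie--Weiss--Potts analysis in the literature. Once that reduction is in place, the remaining steps reduce to the routine one-variable optimization sketched above and the explicit verification that $a^\star = (\theta-1)/\theta$ satisfies the matching system.
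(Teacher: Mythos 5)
Your overall approach mirrors the paper's: Lagrange multipliers force any interior critical point to take at most two distinct coordinate values, which reduces the problem to one-parameter families indexed by how many coordinates take the larger value (your $k$, the paper's $r$); the $r=1$ branch then gives $a^\star=\tfrac{\theta-1}{\theta}$, $b^\star=\tfrac{1}{\theta(\theta-1)}$ and $\b_\crit=2\tfrac{\theta-1}{\theta-2}\log(\theta-1)$, exactly as in the paper. (Minor omission: you should also rule out maximizers on $\partial\D$; the paper does this by observing that $\partial\phi_\b/\partial x_i\to+\infty$ as $x_i\downarrow 0$.)

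However, the step you yourself flag as the ``main obstacle'' --- showing that the global maximum of $\phi_\b$ is never attained on a two-valued configuration with $2\le k\le\theta-1$ --- is precisely where the paper's technical content lies, and your proposal leaves it unproved. The paper closes this gap as follows. For each $r$ it introduces the threshold function
\[
R(t)=\Big(\tfrac{2\theta(\theta-r)}{r}\Big)\,
\frac{rt\log t+(1-rt)\log\tfrac{1-rt}{\theta-r}+\log\theta}{(\theta t-1)^2},
\]
characterized by the equivalence $\phi^{(r)}_\b(t)\ge\phi^{(r)}_\b(\tfrac1\theta)\iff\b\ge R(t)$. Crucially, Appendix~C proves $R$ is convex on $(0,1/r)$; combined with $R(t)\to\infty$ as $t\uparrow\tfrac1r$ and the stationarity $R'\big(\tfrac{\theta-r}{r\theta}\big)=0$, this pins down a unique minimizer of $R$ on $[\tfrac1\theta,\tfrac1r)$. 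Evaluating at that minimizer gives $R\big(\tfrac{\theta-r}{r\theta}\big)=\rho(r/\theta)$ with $\rho(s)=2\theta s\tfrac{1-s}{1-2s}\log\tfrac{1-s}{s}$, and monotonicity of $\rho$ on $[0,\tfrac12]$ yields that $r=1$ minimizes the threshold. That is the branch comparison your sketch defers to ``the Curie--Weiss--Potts analysis in the literature.'' Without an argument of this sort (or an explicit convexity/monotonicity computation of your own), the reduction to $\psi_\b(a)$ is unjustified and the proof is incomplete. Likewise, the claim in your fourth paragraph that a secondary local maximum of $\psi_\b$ appears strictly below $\b=\theta$, and the ``envelope-theorem'' monotonicity, both rest on structural facts about $\psi_\b'$ that in the paper follow cleanly from the convexity of $R$; as stated they are assertions, not proofs.
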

\begin{proof}
Since $x_1\geq\tfrac1\theta$ and $x_\theta\leq\tfrac1\theta$
for all $x\in\D$ we must determine
when $\phi_\b$ has a maximizer different from 
$(\tfrac1\theta,\dotsc,\tfrac1\theta)$.
We start by characterizing the possible
maxima of $\phi_\b$ using the Lagrangian
necessity theorem.  Since the 
functions $\phi_\b(x)$
and $c(x)=\sum x_i-1$ are  $C^1$ on $(0,\oo)^\theta$, and 
$\nabla c(x)$ is nonzero for all $x$, if $x\in\D$ is any local
extremum  of $\phi_\b$ then there is some $a\in\RR$ such that
\[
\nabla\phi_\b(x)=a\nabla c(x)=(a,\dotsc,a).
\]
Now
\be\label{partial-phi}
\frac{\partial\phi_\b}{\partial x_i}=\b x_i-\log x_i-1
\ee
so if $x\in\D$ is a local maximum then there is some $a\in\RR$ such
that 
\be\label{loc-fp}
\b x_i=(1-a)+\log x_i,\qquad\mbox{ for all } i=1,\dotsc,\theta.
\ee
(We see from~\eqref{partial-phi} that the partial derivative diverges
to $+\oo$ if $x_i\downarrow0$, 
thus $\phi_\b$ is not maximized on the boundary
and it suffices to consider local maxima.)
For each $\b>0$ and $a\in\RR$, there are 0, 1 or 2 values of $x_i$
which satisfy~\eqref{loc-fp}.  If there is just 1 solution then all
the $x_i$ are equal, and hence equal to $\tfrac1\theta$.
If there are 2 solutions then, since $\phi_\b$ is symmetric in its
arguments, we can assume that
there is some $1\leq r\leq \theta-1$ such that 
$x$ is of the form
\be\label{x-r}
x=(t,\dotsc,t,\tfrac{1-rt}{\theta-r}, \tfrac{1-rt}{\theta-r})
\qquad
\mbox{for } \tfrac1\theta< t < \tfrac1r, 
\ee
with the first $r$ coordinates equal and the last $\theta-r$
coordinates equal.  Write $\phi^{(r)}_\b(t)$ 
for $\phi_\b$ evaluated at $x$ of
the form~\eqref{x-r}.

We now establish a condition on $\b$ for $\phi^{(r)}_\b(t)$ to exceed
$\phi^{(r)}_\b(\tfrac1\theta)$ for some $t>\tfrac1\theta$.  A short
calculation shows that 
\[
\phi^{(r)}_\b(t)-\phi^{(r)}_\b(\tfrac1\theta)=\frac{\b r}{2\theta(\theta-r)}
(\theta t -1)^2
-[rt\log t+(1-rt)\log\tfrac{1-rt}{\theta-r}+\log\theta].
\] 
Thus $\phi^{(r)}_\b(t)-\phi_\b^{(r)}(\tfrac1\theta)\geq 0$ if and only if
\be\label{R-eq}
\b\geq R(t)=\Big(\frac{2\theta(\theta-r)}{r}\Big)
\frac{rt\log t+(1-rt)\log\tfrac{1-rt}{\theta-r}+\log\theta}
{(\theta t-1)^2}.
\ee
 Hence $\phi_\b$ has a maximizer different 
from $(\tfrac1\theta,\dotsc,\tfrac1\theta)$
if and only if $\b\geq R(t)$ for some $r$ and some $t>\tfrac1\theta$.
We show in Appendix~\ref{R-app} that $R$ is convex.  Also, 
note that $R(t)\to+\oo$ as $t\uparrow\tfrac1r$ and
that $R'(\tfrac{\theta-r}{r\theta})=0$.  Thus $R(t)$ has a unique
minimum in $[\tfrac1\theta,\tfrac1r)$, 
either at the boundary point $t=\tfrac1\theta$ if $r>\theta/2$,
or at $t=\tfrac{\theta-r}{r\theta}$ if $r\leq\theta/2$.  

In the case when $\theta=2$ the only possibility for $t>\tfrac1\theta$
is when $r=1$.  Then 
$\tfrac1\theta=\tfrac{\theta-r}{r\theta}=\tfrac12$ and
hence $\b_\crit(2)=\inf_{t>1/2}R(t)=2$.
If $\theta\geq3$, note that
\be\label{beta-f}
R(\tfrac{\theta-r}{r\theta})=\rho(\tfrac r\theta),
\mbox{ with }
\rho(t)=2\theta t \frac{1-t}{1-2t}
\log\Big(\frac{1-t}{t}\Big).
\ee 
The function $\rho$ is increasing on $[0,\tfrac12]$, 
so $\rho(\tfrac r\theta)$ is minimal for $r=1$.
This gives the critical value $\b_\crit=\rho(\tfrac1\theta)$
claimed.

To check the statements about
$x^\uparrow$ and $x^\downarrow$
at $\b=\b_\crit$, we note that for this
value of $\b$ we have a maximizer of $\phi_\b$ at the point~\eqref{x-r}
with $r=1$ and $t=\tfrac{\theta-1}{\theta}$.
Thus, at $\b=\b_\crit$,
\[
x^\uparrow_1= \tfrac{\theta-1}{\theta}\quad\mbox{and}
\quad
x^\downarrow_\theta= \tfrac{1}{(\theta-1)\theta}.
\]
The claims follow.
\end{proof}

\subsection{The number of vertices in large cycles}

Let $k=k_n\to\oo$ be any sequence going to
$\oo$.  Recall that $X_n(k)=\frac1n\sum_{|\g|\geq k} |\g|$
denotes the fraction of vertices in cycles of
size at least $k$ in the random permutation $\s(\om)$.
We now show that, under $\PP_\theta$ with
$\theta\in\{2,3,\dotsc\}$, asymptotically $X_n(k)$ is at most 
\be
\frac{\theta x_1^\uparrow-1}{\theta-1}.
\ee
Note that this number is $=0$ if and only 
if $x^\uparrow_1=\tfrac1\theta$,
i.e.\ $z^+(\b)=0$.  Proposition~\ref{X-prop}
is a special case of the following result:

\begin{proposition}\label{expdecay-prop}
Let $\b>0$.
For any $\a<1-\tfrac1\theta$ 
and any $\eps>0$ there is some $c>0$ such that
\be\label{expdecay}
\PP_\theta\big(X_n(k)>\eps+\tfrac1\a(x_1^\uparrow-\tfrac1\theta)\big)
\leq e^{-c n}
\ee
for all large enough $n$.
\end{proposition}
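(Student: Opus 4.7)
\medskip
\noindent\textbf{Overall plan.} My plan is to exploit the coupling underlying Lemma~\ref{col-lem}. Let $\tilde\PP$ denote the product of $\PP$ with an independent uniformly random colouring $\eta:V\to\{1,\dotsc,\theta\}$; conditioning on $\cM$, the $\omega$-marginal of $\tilde\PP(\cdot\mid\cM)$ is $\PP_\theta$, and the partition $\la=\la(\eta)$ of monochromatic-class sizes has law $|K(\la)|\,G_n(\la)/Z_n(\b)$ (by Lemma~\ref{col-lem} with $p_i\equiv 1/\theta$). I will translate $\{X_n(k)>C\}$ (an event on $\omega$) into $\{\la_1\geq T_0\}$ with $T_0:=n(\tfrac1\theta+\alpha C)$, and then bound the latter by an exponential Markov argument on $\la_1$.

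\medskip
\noindent\textbf{Step~1: exponential Markov for $\la_1$.} First I would note, using $|K(\la)|\leq\theta!$ and Lemma~\ref{pf-lem}, that
\[
\tilde\EE\bigl[e^{h\la_1}\bigm|\cM\bigr] \;\leq\; C(\theta)\,e^{hn/\theta}\,Z_n(\b,h)/Z_n(\b)
\]
for any $h>0$. Combining with Markov's inequality and Theorem~\ref{free-energy-thm} gives
\[
\limsup_{n\to\oo}\tfrac1n\log\tilde\PP(\la_1\geq rn\mid\cM) \;\leq\; -h\bigl(r-\tfrac1\theta\bigr)+z(\b,h)-z(\b,0).
\]
Since $z(\b,\cdot)$ is convex with $z^+(\b)=x_1^\uparrow-\tfrac1\theta$ (Theorem~\ref{der-thm}), for any $r>x_1^\uparrow$ a small enough $h>0$ renders the right side strictly negative. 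Thus $\tilde\PP(\la_1\geq rn\mid\cM)\leq e^{-c_1(r)n}$ for some $c_1(r)>0$ and all large $n$.

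\medskip
\noindent\textbf{Step~2: $\{X_n(k)>C\}$ forces $\{\la_1\geq T_0\}$.} The crucial step will be to show that, for $\omega$ with $X_n(k)(\omega)>C$, the conditional probability over $\eta$ of $\la_1\geq T_0$ is at least $c'\theta^{-n/k}$ for a positive constant $c'$. I will let $S$ be the set of cycles of $\sigma(\omega)$ of size $\geq k$, so $|S|\leq n/k$ and $L:=\sum_{c\in S}|c|=nX_n(k)$. Among the $\theta^\ell$ valid colourings of $\omega$, those assigning every cycle of $S$ to colour $1$ form a fraction $\theta^{-|S|}\geq\theta^{-n/k}$; under them $|C_1|=L+\sum_{c\notin S}|c|\Ind{\eta(c)=1}$ has mean $L+(n-L)/\theta$, exceeding $T_0$ by
\[
L(1-\tfrac1\theta)-\alpha Cn \;\geq\; Cn(1-\tfrac1\theta-\alpha),
\]
a strictly positive linear-in-$n$ gap because $\alpha<1-\tfrac1\theta$. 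Since $|c|<k$ for $c\notin S$, we have $\sum_{c\notin S}|c|^2\leq kn$, so Hoeffding's inequality produces a concentration bound $\leq\exp(-c_0 n/k)$ with $c_0=c_0(C,\alpha,\theta)>0$; since $n/k\geq 1$ always, this bound is uniformly at most $1-c'$ for some $c'>0$, and the claimed probability lower bound follows.

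\medskip
\noindent\textbf{Step~3: combining, and the main obstacle.} Finally, choosing $C=\eps+\tfrac1\alpha(x_1^\uparrow-\tfrac1\theta)$ makes $T_0/n=x_1^\uparrow+\alpha\eps>x_1^\uparrow$, so Step~1 applies with $r=T_0/n$. Combining with Step~2,
\[
\PP_\theta(X_n(k)>C) \;\leq\; (1/c')\,\theta^{n/k}\,\tilde\PP(\la_1\geq T_0\mid\cM) \;\leq\; (1/c')\,\theta^{n/k}\,e^{-c_1 n},
\]
and since $k_n\to\oo$ gives $\theta^{n/k}=e^{o(n)}$, the right side is $\leq e^{-cn}$ for some $c>0$ and all large $n$. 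The main obstacle is Step~2: the hypothesis $\alpha<1-\tfrac1\theta$ enters precisely because it is exactly what makes $\EE[|C_1|]$ exceed $T_0$ by a linear-in-$n$ margin in the restricted coupling, which is what allows Hoeffding to deliver a strictly-less-than-one concentration bound. A direct exponential Markov on $X_n(k)$ itself only gives exponential decay above $x_1^\uparrow$, which is strictly larger than $\tfrac1\alpha(x_1^\uparrow-\tfrac1\theta)$ in the present regime, so the colouring detour together with the slack $\alpha<1-\tfrac1\theta$ seems essential.
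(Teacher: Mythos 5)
Your proof is correct, but it takes a genuinely different route from the paper's for the key step of relating $X_n(k)$ to the tilted partition function. The paper works entirely on the $\om$-side: writing $Z_n(\b,h)=\EE[\theta^\ell\prod_\g w(h|\g|)]$ with $w(x)=\tfrac1\theta e^{x(1-1/\theta)}+\tfrac{\theta-1}{\theta}e^{-x/\theta}$, it observes that $w\geq1$ everywhere and that $w(x)\geq e^{\a x}$ once $x$ is large (this is where $\a<1-\tfrac1\theta$ enters), so that for large $n$ (this is where $k_n\to\oo$ enters, making $h|\g|\geq hk_n$ large for every cycle counted in $X_n(k)$) one gets the one-line bound $\EE_\theta[\exp(\a h n X_n(k))]\leq Z_n(\b,h)/Z_n(\b,0)$; Markov's inequality plus Theorems~\ref{free-energy-thm} and~\ref{der-thm} then finish exactly as in your Step~3. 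You instead route the estimate through the colouring coupling: an exponential Markov bound for the largest colour class $\la_1$ (your Step~1, which is essentially Lemma~\ref{pf-lem} read probabilistically), combined with a conditional concentration argument showing that $X_n(k)>C$ forces $\la_1\geq T_0$ with probability at least $c'\theta^{-n/k}=c'e^{-o(n)}$ (your Step~2). Your Step~2 is where $\a<1-\tfrac1\theta$ and $k_n\to\oo$ enter for you, playing the same roles as in the paper's argument but in probabilistic rather than algebraic guise. Both arguments are sound and rest on the same asymptotic inputs; the paper's is shorter, while yours makes transparent the combinatorial reason why the threshold $\tfrac1\a(x_1^\uparrow-\tfrac1\theta)$ and the hypothesis $\a<1-\tfrac1\theta$ appear. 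Two very minor points: the claim ``$n/k\geq1$ always'' should be justified by noting that for $k>n$ the event $\{X_n(k)>C\}$ is empty since $C\geq\eps>0$; and the convexity of $z(\b,\cdot)$ invoked in Step~1 is true but unnecessary, since Theorem~\ref{der-thm} alone gives $\tfrac{z(\b,h)-z(\b,0)}{h}<r-\tfrac1\theta$ for all small $h>0$ whenever $r>x_1^\uparrow$.
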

\begin{proof}
We claim that for any
$h>0$ we have for large enough $n$ that
\be\label{exp-bd}
\EE_\theta\Big[
\exp\Big(\a h\sum_{|\g|\geq k} |\g|\Big)
\Big]\leq
\frac{Z_n(\b,h)}{Z_n(\b,0)}.
\ee
Indeed,
\be\label{prod-split}
\begin{split}
Z_n(\b,h)&=\EE\Big[
\prod_{\g}\Big(\frac{e^{h |\g|}+\theta-1}{e^{(h/\theta)|\g|}}
\Big)\Big]\\
&=\EE\Big[\theta^\ell
\prod_{|\g|\geq k}
w(h|\g|)
\prod_{|\g|<k}
w(h|\g|)
\Big)\Big],
\end{split}
\ee
where
\[
w(x)=\frac{e^{x}+\theta-1}{\theta e^{x/\theta}}
=\tfrac1\theta e^{x(1-\sfrac1\theta)}+
\tfrac{\theta-1}{\theta}e^{-\sfrac{x}{\theta}}
\]
is increasing in $x\geq0$, and satisfies:
\be
w(x)\geq
\left\{\begin{array}{ll}
w(0)=1, & \mbox{for all }x\geq0, \\
e^{\a x}, & \mbox{for all large enough }x. 
\end{array}\right.
\ee
It follows from~\eqref{prod-split} that for large enough $n$,
\be
Z_n(\b,h)\geq
\EE\Big[\theta^\ell
\exp\Big(\a h\sum_{|\g|\geq k} |\g|\Big)
\Big],
\ee
which gives the claim.

For any $\eps>0$ we have that
\[
\PP_\theta\big(X_n(k)>\eps+\tfrac1\a(x_1^\uparrow-\tfrac1\theta)\big)=
\PP_\theta\Big(\a h \sum_{|\g|\geq k}|\g|>
hn(\a\eps +x_1^\uparrow-\tfrac1\theta)\Big).
\]
Using Markov's inequality and~\eqref{exp-bd}
it follows that
\[
\PP_\theta\big(X_n(k)>\eps+\tfrac1\a(x_1^\uparrow-\tfrac1\theta)\big)
\leq 
\exp(-hn(\a\eps+x^\uparrow_1-\tfrac1\theta)) \frac{Z_n(\b,h)}{Z_n(\b,0)}.
\]
Thus
\be
\begin{split}
\limsup_{n\to\oo}\frac1n &
\log \PP_\theta\big(X_n(\om)>\eps+\tfrac1\a(x_1^\uparrow-\tfrac1\theta)\big)
\\
&\leq -h(\a\eps+x^\uparrow_1-\tfrac1\theta)+z(\b,h)-z(\b,0)\\
&= h\Big(\frac{z(\b,h)-z(\b,0)}{h}-\a\eps-(x^\uparrow_1-\tfrac1\theta)\Big).
\end{split}
\ee
By Theorem~\ref{der-thm} we have  that
$\lim_{h\downarrow 0} \frac{z(\b,h)-z(\b,0)}{h} =x_1^\uparrow-\tfrac1\theta$,
hence there is some $h>0$ such that
\be
\limsup_{n\to\oo}\frac1n 
\log \PP_\theta\big(X_n(\om)>\eps+\tfrac1\a(x_1^\uparrow-\tfrac1\theta)\big)
\leq-\frac{h\a\eps}{2}.
\ee
This proves the  result.
\end{proof}

\appendix

\section{The transposition-operator}
\label{T-app}

Let $T_{xy}$ be the transposition operator~\eqref{T-def}
acting on the tensor product $\cH_x\otimes\cH_y$ of two copies of
$\cH=\CC^{2S+1}$.
Write $\ket{a,b}$ for the `uncoupled' basis
$\ket{a,b}=\ket a \otimes \ket b$
of $\cH_x\otimes\cH_y$ so that 
$T_{xy}\ket{a,b}=\ket{b,a}$.
We wish to express $T_{xy}$ in terms of the spin operators 
$\bS_x\cdot\bS_y$.
We will show:
\begin{proposition}\label{T-prop}
For each $S\in\tfrac12\NN$ there are 
$a_0,a_1,\dotsc,a_{2S}\in\QQ$ such that
\[
T_{xy}=\sum_{k=0}^{2S} a_k (\bS_x\cdot\bS_y)^k.
\]
\end{proposition}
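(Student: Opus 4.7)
The plan is to use the Clebsch--Gordan decomposition of $\cH_x \otimes \cH_y$ together with Lagrange interpolation. Both $T_{xy}$ and $\bS_x\cdot\bS_y$ are $SU(2)$-equivariant operators on the tensor product of two copies of the spin-$S$ representation, so they act by scalars on each isotypic component, and both scalars are rational with well-understood values.

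Concretely, I would first invoke the standard decomposition
\[
\cH_x \otimes \cH_y \;=\; \bigoplus_{J=0}^{2S} V_J,
\]
where $V_J$ is the irreducible spin-$J$ representation (each appearing with multiplicity one). Using $(\bS_x+\bS_y)^2 = \bS_x^2+\bS_y^2+2\,\bS_x\cdot\bS_y$ and Schur's lemma, the operator $\bS_x\cdot\bS_y$ acts on $V_J$ by the scalar
\[
\lambda_J \;=\; \tfrac12\bigl[J(J+1)-2S(S+1)\bigr] \in \QQ,
\]
and the $2S+1$ values $\lambda_0,\lambda_1,\dotsc,\lambda_{2S}$ are pairwise distinct. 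Next I would identify the action of $T_{xy}$ on each $V_J$: since $V_{2S}$ sits inside the symmetric square $\mathrm{Sym}^2\cH$ and the decomposition alternates between symmetric and antisymmetric subspaces as $J$ decreases by $1$, we get $T_{xy}|_{V_J}=(-1)^{2S-J}\cdot\mathrm{Id}$. (This last fact is a standard consequence of the fact that the symmetric and antisymmetric parts of $\cH\otimes\cH$ have dimensions $\binom{2S+2}{2}$ and $\binom{2S+1}{2}$, which match the dimensions $\sum_{J\equiv 2S\;(2)}(2J+1)$ and $\sum_{J\not\equiv 2S\;(2)}(2J+1)$.)

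With these two ingredients, both $\bS_x\cdot\bS_y$ and $T_{xy}$ are simultaneously diagonalizable with eigenspaces $V_J$, and $\bS_x\cdot\bS_y$ has $2S+1$ distinct eigenvalues. Hence the unique polynomial $P(t) \in \QQ[t]$ of degree $\le 2S$ determined by the Lagrange interpolation conditions
\[
P(\lambda_J) \;=\; (-1)^{2S-J}, \qquad J=0,1,\dotsc,2S,
\]
satisfies $T_{xy}=P(\bS_x\cdot\bS_y)$. Since the $\lambda_J$ and the prescribed values are rational, the explicit formula
\[
a_k = \text{coefficient of }t^k\text{ in }\sum_{J=0}^{2S}(-1)^{2S-J}\prod_{K\neq J}\frac{t-\lambda_K}{\lambda_J-\lambda_K}
\]
gives $a_k\in\QQ$, as required.

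The main obstacle (and the only step not immediate from the representation theory of $SU(2)$) is the identification of the sign $(-1)^{2S-J}$ for the action of $T_{xy}$ on $V_J$. A clean way to pin this down is to evaluate both sides on a single highest-weight vector of each $V_J$, for instance the vector $\ket{S}\otimes\ket{S}\in V_{2S}$ (which is manifestly symmetric, giving eigenvalue $+1$) and then to use that the highest-weight vector of $V_{J-1}$ inside $V_J^{\perp}\cap(V_{J-1}\oplus\cdots)$ is obtained from that of $V_J$ by a Clebsch--Gordan construction that flips symmetry type. Once this parity statement is established, the interpolation step is automatic and yields the rational coefficients $a_0,\dotsc,a_{2S}$.
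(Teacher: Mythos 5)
Your argument is essentially the paper's: both proofs simultaneously diagonalize $T_{xy}$ and $\bS_x\cdot\bS_y$ over the Clebsch--Gordan decomposition $\bigoplus_{J=0}^{2S}V_J$, record the distinct rational eigenvalues $\tfrac12 J(J+1)-S(S+1)$ of $\bS_x\cdot\bS_y$ and the signs $(-1)^{2S-J}$ of $T_{xy}$ on each $V_J$, and then solve for the interpolating polynomial (you via the Lagrange formula, the paper by inverting a Vandermonde system --- the same computation). One caveat worth noting: the dimension count of $\mathrm{Sym}^2\cH$ versus $\wedge^2\cH$ that you offer is only a consistency check and does not by itself determine which $V_J$ lands in which summand once $\theta=2S+1\geq 6$ (for $\theta=6$ both $\{V_1,V_3,V_5\}$ and $\{V_0,V_4,V_5\}$ have total dimension $21$), so the parity-flipping Clebsch--Gordan recursion you sketch is genuinely the load-bearing step; the paper instead gets the sign directly by quoting the standard symmetry $(J,M\ket{a,b}=(-1)^{2S-J}(J,M\ket{b,a}$ of the Clebsch--Gordan coefficients.
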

\begin{proof}
We will use standard properties of additions of spins, in particular
the operator 
\be\label{J-def}
\bJ=(\bS_x+\bS_y)^2=2 S(S+1)I+2\bS_x\cdot\bS_y.
\ee
See e.g.~\cite[Chapter~V]{bohm} for background.
There is an orthonormal basis for $\cH_x\otimes\cH_y$ consisting 
of eigenvectors of $\bJ$, which we denote
\[
\pet{J,M}\mbox{ for }J\in\{0,1\dotsc,2S\}
\mbox{ and } M\in\{-J,\dotsc,J\}.
\]
Note that we use the notation $\pet{J,M}$ for this basis, and
$\ket{a,b}$ for the uncoupled basis.
We have
\be\label{J-eig}
\bJ\pet{J,M}= J(J+1)\pet{J,M}.
\ee
The basis-change matrix from the basis $\pet{J,M}$ 
to the uncoupled basis $\ket{a,b}$ 
is given by the Clebsch--Gordan coefficients
$(J,M\ket{a,b}\in\RR$:
\[
\ket{a,b}=\sum_{J,M} \pet{J,M} (J,M\ket{a,b}.
\]
These coefficients satisfy the relation:
\be\label{JM-ba}
(J,M\ket{a,b}=(-1)^{2S-J}(J,M\ket{b,a}.
\ee
Write $\bK$ for an operator
$\bK=\sum_{k=0}^{2S} a_k(\bS_x\cdot\bS_y)^k$,
where the coefficients $a_k$ are to be chosen.  Using~\eqref{J-def}
and~\eqref{J-eig} we see that
\[
(\bS_x\cdot\bS_y)^k\pet{J,M}=x_J^k\pet{J,M},
\mbox{ where }
x_J=\tfrac12J(J+1)-S(S+1).
\]
We claim that we can pick the coefficients $a_k\in\QQ$ so that 
$\bK\pet{J,M}=(-1)^{2S-J}$ for all $J$.  Indeed, this holds if and
only if the following matrix-equation holds:
\[
\begin{pmatrix}
1 & x_0 & x_0^2 & \dotsb & x_0^{2S} \\
1 & x_1 & x_1^2 & \dotsb & x_1^{2S} \\
 & \vdots &  &  &   \\
1 & x_{2S} & x_{2S}^2 & \dotsb & x_{2S}^{2S} 
\end{pmatrix}
\begin{pmatrix}
 a_0 \\
a_1 \\
\vdots\\
a_{2S}
\end{pmatrix}
=
\begin{pmatrix}
 (-1)^{2S} \\
(-1)^{2S-1} \\
\vdots\\
(-1)^{2S-2S}
\end{pmatrix}.
\]
The Vandermonde-matrix on the left is invertible and has rational
entries, thus its inverse has rational entries and the claim follows. 

With this choice of the $a_k$ we have, using~\eqref{JM-ba},
\[
\begin{split}
\bK\ket{a,b}&=\sum_{J,M}\bK\pet{J,M} (J,M\ket{a,b}
=\sum_{J,M} (-1)^{2S-J}\pet{J,M} (J,M\ket{a,b}\\
&=\ket{b,a},
\end{split}\]
thus $\bK=T_{xy}$ as required.
\end{proof}

\section{Proof of Lemma~\ref{D-lem}}
\label{dh-app}

Recall that
we need to show that if $x,y\in\D$ with $\|x-y\|\leq\eps<\theta^{-2}$ 
then
$d_{\mathrm{H}}(\D(x),\D(y))<\theta\eps^{1/2}$.
Take an arbitrary $z\in\D(x)$.  
We will show that there is some $z'\in\D(y)$
satisfying $\|z-z'\|\leq(\theta-1)\eps^{1/2}$.  
This suffices, by the
symmetry between $x$ and $y$.  

Define
$k=\max\{j\geq1: z_j\geq \eps^{1/2}\}$.
 Since $z_1\geq\tfrac1\theta>\eps^{1/2}$, we have
that $k\geq1$.  Also let $\a=(k-1)\eps^{1/2}\in[0,1]$.
We claim that the following choice of $z'$ satisfies
our requirements:
\[
\begin{split}
z_1'&=\a +(1-\a)z_1+(1-\a)\tsum_{i=k+1}^\theta z_i,\\ 
z_i'&=(1-\a)z_i,\quad\mbox{ for }2\leq i\leq k,\\
z_i'&=0,\quad\mbox{ for }i>k.
\end{split}
\]
First we check that $\|z-z'\|\leq(\theta-1)\eps^{1/2}$.
Indeed, we have that
$|z_1-z_1'|\leq\a+(\theta-k)\eps^{1/2}=(\theta-1)\eps^{1/2}$,
that $|z_i-z_i'|\leq\a=(k-1)\eps^{1/2}$ for $2\leq i\leq k$,
and that $|z_i-z_i'|=z_i<\eps^{1/2}$ for $i>k$.

Next we check that $z'\dom y$.  If $j\geq k$ then
clearly
$\tsum_{i=1}^j z_i'=1\geq \tsum_{i=1}^j y_i$.
Now let $j<k$.  Firstly, since
$z\dom x$ and $\|x-y\|\leq\eps$, we have that
\[
\tsum_{i=1}^j z_i-\tsum_{i=1}^j y_i\geq
\tsum_{i=1}^j (x_i-y_i)\geq -j\eps.
\]
Hence we see that
\[
\begin{split}
\tsum_{i=1}^j z_i'-\tsum_{i=1}^j y_i&\geq
\a +\big(\tsum_{i=1}^j z_i-\tsum_{i=1}^j y_i\big)
-\a \tsum_{i=1}^j z_i\\
&\geq -j\eps + \a\tsum_{i=j+1}^k z_i\\
&\geq -j\eps+(k-1)(k-j)\eps\geq 0.
\end{split}
\]
The result follows.
\qed

\section{Convexity of the function $R(t)$}
\label{R-app}

Recall that we needed to know that a certain 
function $R(t)$, given in~\eqref{R-eq},
is convex.
The function $R(t)$ is (up to a constant factor)
\[
R(t)=
\frac{rt\log t+(1-rt)\log\tfrac{1-rt}{\theta-r}+\log\theta}
{(\theta t-1)^2},\qquad
\tfrac1\theta<t<\tfrac1r.
\]
Actually $R(t)$ is well-defined for all $t\in(0,1/r)$.  With $s=rt$,
$p=\theta/r$ and $q=\theta/(\theta-r)$ the convexity of $R(t)$ follows
from the following result.

\begin{lemma}
For any $p,q>0$ satisfying $\tfrac1p+\tfrac1q=1$, the function
\[
f(s)=\frac{s\log(ps)+(1-s)\log(q(1-s))}{(s-\tfrac1p)^2},
\quad s\in(0,1)
\]
is convex.
\end{lemma}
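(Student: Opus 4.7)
The plan is to avoid computing $f''(s)$ directly by exploiting the fact that the numerator $N(s):=s\log(ps)+(1-s)\log(q(1-s))$ has a double zero at the point $s=1/p$, which is precisely where the denominator $(s-1/p)^{2}$ vanishes. Using $1/p+1/q=1$ one checks immediately that $N(1/p)=0$, that $N'(s)=\log\!\bigl(\tfrac{ps}{q(1-s)}\bigr)$ vanishes at $s=1/p$, and that $N''(v)=\tfrac1v+\tfrac1{1-v}=\tfrac1{v(1-v)}$. Taylor's formula with integral remainder, centered at $1/p$, then gives
\[
N(s)=\int_{1/p}^{s}(s-v)\,N''(v)\,dv.
\]

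The next step is to substitute $v=(1-u)/p+us$ for $u\in[0,1]$; then $dv=(s-1/p)\,du$ and $s-v=(s-1/p)(1-u)$, so dividing through by $(s-1/p)^{2}$ yields the key representation
\[
f(s)=\int_{0}^{1}\frac{1-u}{v(s,u)\,\bigl(1-v(s,u)\bigr)}\,du,\qquad v(s,u)=(1-u)/p+us.
\]
Since $v(s,u)$ is a convex combination of $1/p$ and $s$, both in $(0,1)$, the integrand is well-defined for every $s\in(0,1)$ and $u\in[0,1]$, and by continuity the representation extends to $s=1/p$ (where $f$ takes the value $N''(1/p)/2=pq/2$).

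The convexity of $f$ is then immediate. For each fixed $u$, the map $s\mapsto v(s,u)$ is affine, and $v\mapsto \tfrac1{v(1-v)}=\tfrac1v+\tfrac1{1-v}$ is strictly convex on $(0,1)$ as a sum of strictly convex functions; hence the integrand is convex in $s$ for every $u\in[0,1]$. Integrating against the non-negative weight $(1-u)\,du$ preserves convexity, so $f$ is convex on $(0,1)$, as required.

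The main conceptual step is recognizing the Taylor-remainder representation centered at $1/p$; this is what converts the apparent $0/0$ singularity into a manifestly convex object. The only mild subtlety I anticipate is verifying that the representation holds uniformly for $s$ on both sides of $1/p$ and extends continuously through the singular point, but this is immediate from the substitution since $v(s,u)$ stays inside $(0,1)$ throughout. No step requires any nontrivial calculation beyond the two derivatives of $N$.
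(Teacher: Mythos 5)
Your proof is correct, and it takes a genuinely different — and arguably cleaner — route than the paper's. The paper computes $g(s)=(s-\tfrac1p)^4 f''(s)$ explicitly, verifies $g''(s)=\tfrac{2(s-1/p)^2(1-3s+3s^2)}{s^3(1-s)^3}\geq 0$, observes $g\to+\infty$ at the endpoints and $g'(\tfrac1p)=0$, and so concludes $g\geq g(\tfrac1p)=0$; this is a direct but fairly heavy second-derivative computation. You instead notice that the numerator $N$ has a double zero precisely at $s=\tfrac1p$, apply Taylor's theorem with integral remainder there, and reparametrize the remainder integral to obtain the representation
\[
f(s)=\int_0^1 \frac{1-u}{v(s,u)\bigl(1-v(s,u)\bigr)}\,du,\qquad v(s,u)=(1-u)\tfrac1p+us,
\]
from which convexity is immediate since $v\mapsto\tfrac1{v(1-v)}=\tfrac1v+\tfrac1{1-v}$ is convex on $(0,1)$, $v(\cdot,u)$ is affine, and the weight $(1-u)\,du$ is nonnegative. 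This is a strictly structural argument: no second (let alone fourth) derivative of $f$ is ever computed, and as a byproduct you get that $f$ extends to a smooth function through the removable singularity at $s=\tfrac1p$ with $f(\tfrac1p)=pq/2$, which the paper's proof does not make explicit. The paper's approach is shorter to state but relies on a felicitous algebraic simplification of $g''$; yours explains structurally \emph{why} the function is convex. Both are valid.
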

\begin{proof}
Let $g(s)=(s-\tfrac1p)^4f''(s)$.  Direct computation gives
\[
g(s)=\frac{(s-\tfrac1p)^2}{s(1-s)}+
(2s+\tfrac4p)\log(ps)+
(6-(2s+\tfrac4p))\log(q(1-s))
\]
and thus
\[
g''(s)=\frac{2(s-\tfrac1p)^2(1-3s+3s^2)}{s^3(1-s)^3}\geq0.
\]
Hence $g$ is convex on $(0,1)$.  It is easy to see that $g(s)\to+\oo$
as $s\downarrow0$ or $s\uparrow1$, so $g$ has a unique 
minimum in $(0,1)$.  We have that
\[\begin{split}
g'(s)&=2\log(ps)-2\log(q(1-s))\\
&\quad+
\frac{\tfrac4p+2s}{s}+\frac{(s-\tfrac1p)^2}{s(1-s)^2}
-\frac{(s-\tfrac1p)^2}{s^2(1-s)}\\
&\quad+2\frac{s-\tfrac1p}{s(1-s)}
-\frac{6-\tfrac4p-2s}{1-s},
\end{split}\]
and therefore $g'(\tfrac1p)=0$.  It follows that 
\[
g(s)\geq g(\tfrac1p)=0\mbox{ for all }s\in(0,1).
\]
We conclude  that $f''(s)\geq0$ for all $s\in(0,1)$, as required.
\end{proof}

\subsection*{Acknowledgement}
Part of this work was carried out while the author was at
Chalmers University of Technology in G\"oteborg, Sweden, with
support from the Knut and Alice Wallenberg Foundation.

\end{document}